\newtheorem{theorem}{Theorem}[section]
\theoremstyle{remark}
\newcommand{\mat}[1]{\bm{#1}}
\renewcommand{\vec}[1]{\mathbf{#1}}
\DeclareFontFamily{U}{mathx}{\hyphenchar\font45}
\DeclareFontShape{U}{mathx}{m}{n}{<-> mathx10}{}
\DeclareSymbolFont{mathx}{U}{mathx}{m}{n}
\DeclareMathAccent{\widebar}{0}{mathx}{"73}
\DeclareMathAlphabet{\mathcalvar}{OMS}{cmsy}{m}{n}
\DeclareMathOperator{\Ero}{Ero}
\newcommand{\Mod}[1]{\ (\mathrm{mod}\ #1)}
\let\emptyset\varnothing
\newcommand{\kibitz}[2]{\ifnum\Comments=0\textcolor{#1}{#2}\fi}
\newcommand{\Fangyu}[1]{\kibitz{magenta} {/* Fangyu: #1 */}}
\begin{document}

\title{Composing MPC with LQR and Neural Network\\for Amortized Efficiency and Stable Control}

\author{Fangyu Wu, Guanhua Wang, Siyuan Zhuang, Kehan Wang,\\ Alexander Keimer, Ion Stoica, and Alexandre Bayen
        % <-this % stops a space
\thanks{The article is submitted to the IEEE Transactions on Automation Sciences and Engineering on \today~for peer review.
This material is based upon work supported by the National Science Foundation under Grant Numbers CNS-1837244.
Any opinions, findings, and conclusions or recommendations expressed in this material are those of the author(s) and do not necessarily reflect the views of the National Science Foundation.
This material is based upon work supported by the U.S. Department of Energy’s Office of Energy Efficiency and Renewable Energy (EERE) award number CID DE-EE0008872. 
The views expressed herein do not necessarily represent the views of the U.S. Department of Energy or the United States Government.
}
\thanks{All authors are with the Department of Electrical Engineering and Computer Sciences at the University of California, Berkeley, Berkeley, CA 94709 USA. 
For correspondence, please contact Fangyu Wu (e-mail: fangyuwu@berkeley.edu).}
}

% The paper headers
% \markboth{Journal of \LaTeX\ Class Files,~Vol.~14, No.~8, August~2021}%
% {Shell \MakeLowercase{\textit{et al.}}: A Sample Article Using IEEEtran.cls for IEEE Journals}

\IEEEpubid{
\copyright~2022 IEEE}
% Remember, if you use this you must call \IEEEpubidadjcol in the second
% column for its text to clear the IEEEpubid mark.

\maketitle

\begin{abstract}
Model predictive control (MPC) is a powerful control method that handles dynamical systems with constraints.
However, solving MPC iteratively in real time, i.e., implicit MPC, remains a computational challenge.
To address this, common solutions include explicit MPC and function approximation.
Both methods, whenever applicable, may improve the computational efficiency of the implicit MPC by several orders of magnitude. 
Nevertheless, explicit MPC often requires expensive pre-computation and does not easily apply to higher-dimensional problems.
Meanwhile, function approximation, although scales better with dimension, still requires pre-training on a large dataset and generally cannot guarantee to find an accurate surrogate policy, the failure of which often leads to closed-loop instability.
To address these issues, we propose a triple-mode hybrid control scheme, named Memory-Augmented MPC, by combining a linear quadratic regulator, a neural network, and an MPC.
From its standard form, we derive two variants of such hybrid control scheme: one customized for chaotic systems and the other for slow systems.
The proposed scheme does not require pre-computation and is capable of improving the \textit{amortized} running time of the composed MPC with a well-trained neural network.
In addition, the scheme maintains closed-loop stability with \textit{any} neural networks of proper input and output dimensions, alleviating the need for certifying optimality of the neural network in safety-critical applications.
\end{abstract}

\def\abstractname{Note to Practitioners}
\begin{abstract}
This article was motivated by the need to reduce the amortized cost of MPC in repetitive industrial robotic applications, where long-term operational cost is important and safety is critical.
Examples of such applications include factory robotic arm manipulation and fixed-route quadcopter payload transport.
Unlike explicit MPC or function approximation, our approach does not require any pre-computation or pre-training.
Rather, it attains task proficiency over time by learning a surrogate neural network on the spot and by gradually replacing the costly MPC with the more efficient surrogate model so long as safety permits.
Consequently, the proposed scheme incurs a learning cost during the initial phase of the deployment but usually becomes more adept on the task afterwards, leading to amortized efficiency.
\end{abstract}

\begin{IEEEkeywords}
Model predictive control, neural networks, robotics
\end{IEEEkeywords}

\section{Introduction}
\label{sec:introduction}
\IEEEPARstart{M}odel predictive control (MPC) is a powerful method for controlling dynamical systems with constraints~\cite{rawlings2009model,borrelli2017predictive}.
It has been widely used in robotics, with applications ranging from ground~\cite{borrelli2005mpc} and aerial~\cite{alexis2012model} vehicle maneuvers to humanoid~\cite{kuindersma2016optimization} and quadruped~\cite{bledt2018cheetah} robot control.

When it comes to \textit{implement} MPC, one has two conventional approaches:
\textsl{1)} implicit MPC, which solves an optimization problem in real time with a preferably efficient problem formulation and numerical scheme,
\textsl{2)} explicit MPC, which computes every solution of the optimization problem via parametric optimization \textit{offline} and looks up the computed solutions \textit{online}.

Explicit MPC often has faster running time than implicit MPC but requires expensive pre-computation.
For problems of many variables and of intricate constraints, such pre-computation tends to demand a prohibitively large amount of time and memory.

Alternatively, one can approximate an implicit MPC controller with a neural network (NN) through supervised learning.
Neural network models are more flexible when it comes to pre-training and generally scales better to higher-dimensional problems.
Nevertheless, to our best knowledge, the NN function approximation approach has yet to fully address the following open problems:
\textsl{1)} it is difficult, if not impossible, to guarantee that the approximation will always converge to a solution of bounded approximation errors;
\textsl{2)} it is challenging to establish closed-loop stability without any knowledge of the accuracy of the approximation.
\IEEEpubidadjcol

Motivated by the above shortcomings of explicit MPC and NN function approximation methods and inspired by~\cite{michalska1993robust}, we propose a triple-mode hybrid control scheme, named \textit{Memory-Augmented MPC} (MAMPC).
The core idea is to mix a costly MPC controller with an efficient linear quadratic regulator (LQR) controller and an efficient NN controller, whenever stability permits.
Unlike conventional explicit MPC or function approximation methods, our method does not require any form of pre-computing.
Instead, it is operational on day one, in spite of being inefficient, and learns to be more proficient on the spot.

We summarize the main contributions of our work below.
\begin{itemize}
    \item We present a novel controller design, i.e., Memory-Augmented MPC, in its standard form and two modified forms.
    \item We prove stability of Memory-Augmented MPC without imposing any condition on the approximation errors of the neural network.
    \item We demonstrate amortized computational efficiency of Memory-Augmented MPC via four numerical experiments.
\end{itemize}
% The rest of the article is organized as follows.
% Next, a brief literature review is provided in Section~\ref{sec:related-works} to situate the work in context.

\section{Related Works}
\label{sec:related-works}

A major limitation of MPC has been a lack of computational efficiency as evident in applications such as the Atlas humanoid robot~\cite{kuindersma2016optimization} and the MIT Cheetah robot~\cite{bledt2018cheetah}.
To overcome this shortcoming, one typically has three approaches: \textsl{1)} developing an efficient implicit MPC policy, \textsl{2)} pre-compiling an explicit MPC policy, and \textsl{3)} approximating a slow implicit MPC policy by an efficient surrogate policy such as a NN.

Implicit MPC approach implements the MPC policy by devising an efficient numerical algorithm for solving the optimization program.
This is perhaps the most common approach of implementing an MPC policy.
Efficient implicit MPC controller often depends on custom optimization solvers that leverage structures specific to the problems they solve, such as~\cite{diehl2005real, mastalli2020crocoddyl}.
Implementation of such implicit MPC is usually done in a fast low-level programming language such as C.
For example, see~\cite{houska2011auto,mattingley2012cvxgen}.
%See also \cite{johansen2014toward} for practical considerations of MPC deployment in embedded systems.
Moreover, recent advances in software and hardware enable implicit MPC controllers to be deployed in traditionally challenging problems, including~\cite{faulwasser2016implementation,kleff2021high}.

When existing implicit MPC methods do not provide satisfactory latency, one sometimes resorts to explicit MPC, whenever such alternative is applicable.
Explicit MPC approach implements the MPC policy by pre-computing the solution of the optimization problem and looking up the solution in real time.
At its core, explicit methods achieve reduced running time at the cost of expensive pre-computation and increased memory footprint.
To reduce memory footprint, numerous works have been developed~\cite{summers2011multiresolution,kvasnica2013complexity}.
% For survey of explicit MPC can be found in~\cite{alessio2009survey}.
Nevertheless, due to poor scalability of the pre-computation step, explicit MPC has more limited use cases than implicit MPC.

When the explicit MPC method fails to apply to a problem, one may consider another alternative, namely, function approximation.
Function approximation approach implements an MPC policy by approximating the costly MPC control law with an efficient surrogate model, such as a NN, and uses that surrogate model for fast online deployment.
Early works in this direction include \cite{parisini1995receding,parisini1998nonlinear}, which has described how to design a surrogate NN controller and has proposed conditions to guarantee its closed-loop stability.
Later development has focused on guaranteeing different theoretical properties of the NN-controlled closed-loop system. 
For example, \cite{zhang2019safe} proposes a NN method with \textit{probabilistic} optimality bounds;
\cite{paulson2020approximate} has designed a projection operator that can modify a NN controller to a stabilizing one for linear systems;
and \cite{karg2020stability} proposes a mixed-integer linear program to certify stability and feasibility of ReLU-activated NN controllers.
Besides, rather than replacing the implicit MPC controller,~\cite{chen2022large} uses a NN to warm start an optimization solver, which may then speed up convergence of the implicit MPC solver.
For a comprehensive study, readers may consult~\cite{zoppoli2020neural}.
Applications of function approximation methods have found many successes in practice, such as~\cite{nubert2020safe}.

It is worth noting that closed-loop stability of the function approximation methods generally requires reasonable convergence of NN model to the original implicit MPC control law.
Nevertheless, training a NN is commonly done through a black-box solver, such as~\cite{DBLP:journals/corr/KingmaB14}, the convergence of which is generally not guaranteed.
% Therefore, success of function approximation method reply heavily on ad-hoc trials-and-errors during the training process.
In contrary, we propose a novel function approximation approach named Memory-Augmented MPC, which is always stable and always guarantees constraint satisfaction, without imposing \textit{any} condition on the convergence and optimality of the NN.
The accuracy of the NN only affects running time: the more accurate the NN is, the more efficient the overall control scheme is.
As a result, it can be deployed without any training, with the expectation that it will attain computational efficiency over time via learning.
%In fact, it will even work with \textit{randomly} initialized NNs.

As a final remark, a quick comparison of MAMPC with a few key existing works is in order.
We acknowledge that our method is inspired by the seminal early work in dual-mode MPC~\cite{michalska1993robust}, which combines an MPC with a LQR.
An advantage of our approach over~\cite{michalska1993robust} is that it could potentially provide more reduction in latency with the help of a fast intermediate NN mode.
As detailed in Section~\ref{sec:discussions}, with parallelization, MAMPC is \textit{at worst} equivalent to~\cite{michalska1993robust}.
Moreover, compared to the techniques in~\cite{paulson2020approximate,chen2022large}, which only apply to linear plants with convex quadratic objectives, our method works for any nonlinear problems.
% Specifically, our method combines a costly MPC policy with an efficient NN controller and a LQR controller.

% We continue the study by introducing a basic set of terminology and notations in Section~\ref{sec:a-primer-on-mpc}.

\section{Composing MPC, LQR, and NN}
\label{sec:composing-mpc-lqr-and-nn}

We begin the section by first introducing some basic notations, terminologies, and a key stability theorem of MPC.
Next, we present the standard construction in MAMPC, followed by two variations of such construction, namely, alternating-authority MAMPC and way-point MAMPC.

\subsection{Basics of MPC}
\label{subsec:basics-of-mpc}
Consider the discrete-time time-invariant constrained dynamical systems of the form
\begin{equation}
  \vec{x}[i+1] = f(\vec{x}[i], \vec{u}[i]), \quad i \in \mathbb{Z}_{\geq 0}
  \label{eq:sys}
\end{equation}
where
\begin{equation*}
    \begin{aligned}
      &\vec{x}[0] = \vec{x}_{0} \quad \text{ for some }\vec{x}_{0} \in \mathbb{R}^{n},\\
      &(\vec{x}[i], \vec{u}[i]) \in \mathbb{A}, \quad \forall i > 0,
    \end{aligned}
\end{equation*}
where 
$\mathbb{A} \subset \mathbb{R}^{n} \times \mathbb{R}^{m}$ is a closed set defining the system constraints,
$\vec{x}[i]$ is the state at time index $i$,
$\vec{u}[i]$ is the input at time index $i$,
$f: \mathbb{R}^{n} \times \mathbb{R}^{m} \rightarrow \mathbb{R}^{n}$ is continuously differentiable in both its components, with an equilibrium point at ($\vec{0}_{n}$, $\vec{0}_{m}$), i.e., $f(\vec{0}_{n}, \vec{0}_{m}) = \vec{0}_{n}$, 
and the linearized time-invariant system of $f(\cdot, \cdot)$ around the equilibrium is stabilizable.

Let
$\mathbb{X}$
be state constraint set and 
$\mathbb{U}$
be input constraint set.
We consider system constraints of the following form $\mathbb{A} \coloneqq \mathbb{X} \times \mathbb{U}$.
The system dynamics is a map $f:\mathbb{A} \rightarrow \mathbb{X}$ and any control law is a map $u: \mathbb{X} \rightarrow \mathbb{U}$.

Consider a finite planning and control horizon $N$.
Let $\mathcal{X}^{1:N} \coloneqq (\vec{x}[1], \dots, \vec{x}[N])$ be a trajectory of the system and $\mathcal{U}^{0:N-1} \coloneqq (\vec{u}[0], \dots, \vec{u}[N-1])$ be the corresponding ordered control sequence.

Provided with the dynamical system~\eqref{eq:sys}, an MPC is an optimal control law implicitly defined through the following set of optimization problems
\begin{equation}
    \begin{aligned}
        \min_{\mathcal{X}_{i}^{1:N}, \mathcal{U}_{i}^{0:N-1}} \quad & \sum_{k=0}^{N-1}{c(\vec{x}[k|i], \vec{u}[k|i])} + c_{f}(\vec{x}[N|i])\\
        \textrm{s.t.:} \quad
        &\vec{x}[0|i] = \vec{x}[i],\\
        & \vec{x}[k+1|i] = f(\vec{x}[k|i], \vec{u}[k|i]),\\
        & (\vec{x}[k|i], \vec{u}[k|i]) \in \mathbb{A},\\
        & \vec{x}[N|i] \in \mathbb{X}_f,\\
        & k = 0, \dots, N-1,
    \end{aligned}
    \label{eq:mpc}
\end{equation}
where $N$ is planning and control horizon,
$c: \mathbb{A} \rightarrow \mathbb{R}_{+}$ is a continuous stage cost function,
$c_{f}: \mathbb{X}_{f} \rightarrow \mathbb{R}_{+}$ is a continuous terminal cost function,
$\vec{x}[k|i]$ is the predicted state $k$ steps ahead of \textit{present} state $\vec{x}[i]$,
$\vec{u}[k|i]$ is the anticipated input that generates $\vec{x}[k+1|i]$ from $\vec{x}[k|i]$,
$\mathcal{X}_{i}^{1:N} \coloneqq (\vec{x}[1|i], \dots, \vec{x}[N|i])$ and 
$\mathcal{U}_{i}^{0:N-1} \coloneqq (\vec{u}[0|i], \dots, \vec{u}[N-1|i])$ are two sets of optimization variables corresponding to predicted states and anticipated inputs,
$\mathbb{X}_{f}$ is a terminal constraint set containing the origin usually designed to guarantee asymptotic stability.
If optimization problem \eqref{eq:mpc} is feasible and admits an optimal solution $(\mathcal{X}_{i}^{1:N*}, \mathcal{U}_{i}^{0:N-1*})$, then the MPC control law selects the first element of $\mathcal{U}_{i}^{0:N-1*}$ as the input at time index $i$, that is,
\begin{equation}
   u_{\textnormal{MPC}}(\vec{x}[i]) \coloneqq \vec{u}^{*}[0|i] = \mathcal{U}_{i}^{0:N-1*}[0].
\end{equation}
This process is repeated at the next time index $i+1$, until some termination criterion is met.
As a result, we obtain a closed-loop system
\begin{equation}
    \vec{x}[i+1] = f(\vec{x}[i], u_{\textnormal{MPC}}(\vec{x}[i])), \quad i \in \mathbb{Z}_{\geq 0},
    \label{eq:clsys-mpc}
\end{equation}
with initial condition
$\vec{x}[0] = \vec{x}_{0} \in \mathbb{X}_{0} \coloneqq \{\vec{x}_{0} \in \mathbb{R}^{n} \mid \text{problem \eqref{eq:mpc} is feasible}\}$
is the set of all admissible initial states.
For well-posedness of MPC, we require $ \mathbb{X} \subseteq \mathbb{X}_{0}$.

Let 
$J(\vec{x}[i]) \coloneqq \sum_{k=0}^{N-1}{c(\vec{x}[k|i], \vec{u}[k|i])} + c_{f}(\vec{x}[N|i])$ 
be the cumulative cost and $J^{*}(\vec{x}[i]) \coloneqq \sum_{k=0}^{N-1}{c(\vec{x}^{*}[k|i], \vec{u}^{*}[k|i])} + c_{f}(\vec{x}^{*}[N|i])$ be the optimal cumulative cost.
By Lyapunov stability theorem, we can derive the following sufficient condition for asymptotic stability of MPC as in Theorem 12.2 in~\cite{borrelli2017predictive}.

\begin{theorem}[Local Asymptotic Stability of MPC]
  Assume that
  \begin{itemize}
    \item The stage cost and terminal cost are continuous and positive definite.
    \item $\mathbb{X}_{f}$ is control invariant.
    \item For all $\vec{x} \in \mathbb{X}_{f}$,
    \begin{align*}
        \min_{\vec{u}: \vec{u} \in \mathbb{U}, f(\vec{x}, \vec{u}) \in \mathbb{X}_{f}} \quad c(\vec{x}, \vec{u}) - c_{f}(\vec{x}) + c_{f}(f(\vec{x}, \vec{u})) \leq 0.
    \end{align*}
  \end{itemize}
  then the MPC problem~\eqref{eq:mpc} is persistently feasible and the closed-loop system~\eqref{eq:clsys-mpc} is locally asymptotically stable with respect to the origin.
  In addition, $J^{*}(\cdot)$ is a Lyapunov function and $\mathbb{X}_{0}$ is a positively invariant region of attraction.
  \label{thm:as-mpc}
\end{theorem}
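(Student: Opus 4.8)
The plan is to follow the classical Lyapunov argument for receding-horizon control: first establish persistent feasibility, then exhibit $J^{*}(\cdot)$ as a Lyapunov function on $\mathbb{X}_{0}$. The engine of both parts is a single construction, namely a shifted, ``warm-started'' candidate solution at the successor time index.

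First I would prove persistent feasibility together with positive invariance of $\mathbb{X}_{0}$. Suppose problem~\eqref{eq:mpc} is feasible at state $\vec{x}[i]$ with optimizer having predicted states $\vec{x}^{*}[0|i], \dots, \vec{x}^{*}[N|i]$. Since the applied input is $u_{\textnormal{MPC}}(\vec{x}[i]) = \vec{u}^{*}[0|i]$, the successor state is $\vec{x}[i+1] = \vec{x}^{*}[1|i]$. I would then build a candidate input sequence at time $i+1$ by dropping the first input and appending one more, $(\vec{u}^{*}[1|i], \dots, \vec{u}^{*}[N-1|i], \tilde{\vec{u}})$. Because $\vec{x}^{*}[N|i] \in \mathbb{X}_{f}$ and $\mathbb{X}_{f}$ is control invariant, there exists $\tilde{\vec{u}} \in \mathbb{U}$ with $f(\vec{x}^{*}[N|i], \tilde{\vec{u}}) \in \mathbb{X}_{f}$; taking $\tilde{\vec{u}}$ as the minimizer in the third assumption, the candidate trajectory satisfies every dynamic, state--input, and terminal constraint. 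Hence the successor problem is feasible, i.e.\ $\vec{x}[i+1] \in \mathbb{X}_{0}$, which yields both persistent feasibility and positive invariance of $\mathbb{X}_{0}$.

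Next I would verify the two defining properties of a discrete-time Lyapunov function. Positive definiteness of $J^{*}$ follows from that of $c$ and $c_{f}$: the zero trajectory gives $J^{*}(\vec{0}_{n})=0$, while for $\vec{x}[i] \neq \vec{0}_{n}$ the stage term $c(\vec{x}[i], \vec{u}^{*}[0|i])$ forces $J^{*}(\vec{x}[i]) > 0$. For the descent inequality I would upper-bound the successor optimal cost by the cost of the candidate sequence above, which by optimality gives
\[
J^{*}(\vec{x}[i+1]) \leq J^{*}(\vec{x}[i]) - c(\vec{x}[i], u_{\textnormal{MPC}}(\vec{x}[i])) + \bigl[ c(\vec{x}^{*}[N|i], \tilde{\vec{u}}) - c_{f}(\vec{x}^{*}[N|i]) + c_{f}(f(\vec{x}^{*}[N|i], \tilde{\vec{u}})) \bigr].
\]
The bracketed terminal mismatch is $\leq 0$ by the third assumption, so $J^{*}(\vec{x}[i+1]) - J^{*}(\vec{x}[i]) \leq -c(\vec{x}[i], u_{\textnormal{MPC}}(\vec{x}[i])) < 0$ whenever $\vec{x}[i] \neq \vec{0}_{n}$.

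Finally, with $J^{*}$ positive definite and strictly decreasing along the closed-loop trajectories~\eqref{eq:clsys-mpc}, I would invoke the discrete-time Lyapunov stability theorem to conclude local asymptotic stability with respect to the origin, with $\mathbb{X}_{0}$ as a region of attraction. I expect the main obstacle to lie not in the algebra of the descent inequality but in the regularity needed to make the Lyapunov conclusion fully rigorous: establishing that $J^{*}$ is continuous, or at least bounded below by a class-$\mathcal{K}$ function near the origin, since for nonlinear $f$ the value function of~\eqref{eq:mpc} need not be continuous, and guaranteeing a well-defined selection of the appended input $\tilde{\vec{u}}$. I would address these points by leaning on continuity of $c$, $c_{f}$, and $f$ together with stabilizability of the linearization, which secures the required local behavior of $J^{*}$ near the equilibrium.
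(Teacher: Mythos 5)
Your proposal is correct and follows exactly the classical argument that the paper itself defers to (Theorem 12.2 of Borrelli et al., which the paper cites without reproducing its proof): the shifted warm-start candidate sequence with a terminal input supplied by control invariance of $\mathbb{X}_{f}$ gives persistent feasibility and positive invariance of $\mathbb{X}_{0}$, and the same candidate yields the descent inequality making $J^{*}$ a Lyapunov function. Your closing caveat about continuity/class-$\mathcal{K}$ regularity of $J^{*}$ for nonlinear $f$ is the right technical point to flag and is handled the same way in the cited reference.
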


With the above definitions and theorem, we are ready to describe the composition rules of MPC, LQR, and NN.

\subsection{Composition Rules}
\label{subsec:composition-rules}
Provided with a computationally costly implicit MPC $u_{\textnormal{MPC}}$ that satisfies Theorem~\eqref{thm:as-mpc}, we propose a hybrid control scheme, namely, \textit{Memory-Augmented Model Predictive Control} (MAMPC), $u_{\textnormal{MAMPC}} : \mathbb{X} \rightarrow \mathbb{U}$, by augmenting $u_{\textnormal{MPC}}$ with a LQR controller $u_{\textnormal{LQR}}$ and a NN controller $u_{\textnormal{NN}}$, where the LQR controller and the NN controller are independently developed from the implicit MPC.
We first describe how the LQR and NN controllers are derived from the implicit MPC and then present ways to combine these controllers.
% In addition, we present two variants of MAMPC specifically tailored for chaotic systems and slow systems, respectively.

\subsubsection{LQR from MPC}
\label{subsubsec:lqr-from-mpc}
Provided with an MPC, we derive an infinite-time LQR controller by \textsl{1)} linearizing the system dynamics around the equilibrium, \textsl{2)} removing stage constraints, terminal constraint, and terminal cost, \textsl{3)} taking planning and control horizon $N$ to $\infty$, and \textsl{4)} replacing stage cost $c(\cdot)$ with a positive definite quadratic cost, if it is not already so in the original MPC.

Formally, the MPC-induced LQR problem is defined as follows
\begin{equation}
    \begin{aligned}
        \min_{\mathcal{X}_{i}^{1:\infty}, \mathcal{U}_{i}^{0:\infty}} \quad & \sum_{k=0}^{\infty} \vec{x}^{\top}[k|i] \mat{Q} \vec{x}[k|i] + \vec{u}^{\top}[k|i] \mat{R} \vec{u}[k|i]\\
        \textrm{s.t.:} \quad
        &\vec{x}[0|i] = \vec{x}[i],\\
        & \vec{x}[k+1|i] = \mat{A} \vec{x}[k|i] + \mat{B} \vec{u}[k|i],\\
        & k \in \mathbb{Z}_{+},
    \end{aligned}
    \label{eq:lqr}
\end{equation}
where $\mat{A} \coloneqq \left. \frac{\partial f}{\partial \vec{x}} \right\rvert_{(\vec{0}, \vec{0})} \in \mathbb{R}^{n \times n}, \mat{B} \coloneqq \left. \frac{\partial f}{\partial \vec{u}} \right\rvert_{(\vec{0}, \vec{0})} \in \mathbb{R}^{n \times m}$ are linearized system dynamics around the equilibrium, and
$\mat{Q} \in \{\mat{N} \in \mathbb{R}^{n \times n} \mid \mat{N} \succeq 0\}, \mat{R} \in \{\mat{M} \in \mathbb{R}^{m \times m} \mid \mat{M} \succeq 0\}$ are weighting matrices that measure significance of state deviations and control costs, respectively.

It can be shown that the optimal LQR control law for an initial state $\vec{x}$ is a linear map 
\begin{equation}
    u_{\textnormal{LQR}}(\vec{x}) = -\mat{K}\vec{x}
\end{equation}
for some $\mat{K} \in \mathbb{R}^{m \times n}$ \cite{anderson2018optimal}. 
Furthermore, if the linearized system $(\mat{A}, \mat{B})$ is \textit{stabilizable} and if $\mat{A} - \mat{B}\mat{K}$ is Hurwitz, then the closed-loop system
\begin{equation}
    \vec{x}[i+1] = f(\vec{x}[i], u_{\textnormal{LQR}}(\vec{x}[i])),
    \label{eq:clsys-lqr}
\end{equation}
is \textit{locally} asymptotically stable near the equilibrium with a \textit{positively invariant} region of attraction $\mathcal{R}_{\textnormal{LQR}}$, as shown in Theorem 4.7 in ~\cite{khalil2002nonlinear}. 
Specifically, $\mathcal{R}_{\textnormal{LQR}}$ is a set that if $\vec{x}_{0} \in \mathcal{R}_{\textnormal{LQR}}$ and $\vec{x}[0] = \vec{x}_{0}$, then $\vec{x}[i] \in \mathcal{R}_{\textnormal{LQR}}, \forall i \in \mathbb{Z}_{>0}$ and $\lim_{i\rightarrow\infty}\vec{x}[i] = \vec{0}$.
As implied by local asymptotic stability, the largest possible $\mathcal{R}_{\textnormal{LQR}}$ is nonempty.

Because the LQR controller only requires one matrix multiplication within a single controller step, it is usually significantly more efficient than the implicit MPC in terms of per-step computation.
Hence, it is usually economical to switch to the LQR controller when the state is in the region of attraction of the LQR, that is, $\vec{x}[i] \in \mathcal{R}_{\textnormal{LQR}}$.

\subsubsection{NN from MPC}
\label{subsubsec:nn-from-mpc}
Provided with an MPC, we derive a NN controller 
\textsl{1)} by imitating the MPC policy $u_{\textnormal{MPC}}(\vec{x})$ through supervised learning and 
\textsl{2)} \textit{optionally} by interacting with the environment through reinforcement learning, as illustrated in Figure~\ref{fig:learning-schematics}.

\begin{figure}
    \begin{tikzpicture}
        \node[align=center] (A) at (0,2) {$u_{\textnormal{MPC}}(\vec{x})$};
        \node[align=center] (B) at (0,0) {$u_{\textnormal{NN}}(\vec{x})$};
        \node[align=center] at (0,-0.75) {(a) Imitation phase};
        \draw[->,>=latex] (A) -- (B) node[midway,sloped,left,rotate=90,text width=2cm,align=right] {};

        \node[align=center] (C) at (4,2) {$f(\vec{x}, u_{\textnormal{NN}}(\vec{x}))$};
        \node[align=center] (D) at (4,0) {$u_{\textnormal{NN}}(\vec{x})$};
        \node[align=center] at (4,-0.75) {(b) Adaptation phase};
        \draw[->,>=latex] (C) -| (2,1) |- (D);
        \draw[->,>=latex] (D) -| (6,1) |- (C);
    \end{tikzpicture}
    \caption{
        Schematics of the learning processes.
        (a) Imitation phase: $(\vec{x}, u_{\textnormal{MPC}}(\vec{x}))$ provides data to train the NN controller $u_{\textnormal{NN}}$ in a supervised learning setting. 
        (b) Adaptation phase (optional): $u_{\textnormal{NN}}$ interacts with the environment $f(\cdot, \cdot)$ to enhance control effectiveness in a reinforcement learning setting.
    }
    \label{fig:learning-schematics}
\end{figure}
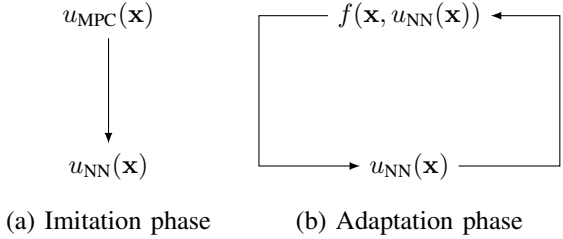

In the \textit{imitation} phase, the NN training is a supervised learning problem usually of the following mean squared form
\begin{equation}
  \min_{\mathcal{W}} \quad \frac{1}{M} \sum_{\vec{x} \in \mathcal{D}} \left\Vert u_{\textnormal{MPC}}(\vec{x}) - \phi(\vec{x} \mid \mathcal{W}) \right\Vert_{2}^{2},
  \label{eq:nn}
\end{equation}
where $\phi : \mathbb{R}^{n} \rightarrow \mathbb{R}^{m}$ is a NN model with weights $\mathcal{W}$ and $\mathcal{D}$ is a set of $M$ states randomly sampled from $\mathbb{X}_{0}$.

Therefore, the resulting NN control law is
\begin{equation}
    u_{\textnormal{NN}}(\vec{x}) \coloneqq \phi(\vec{x} \mid \overline{\mathcal{W}}),
\end{equation}
where $\overline{\mathcal{W}}$ is a suboptimal solution to problem~\eqref{eq:nn}, generally estimated via a gradient method such as Adam~\cite{DBLP:journals/corr/KingmaB14}.

In the \textit{adaptation} phase, it is possible to fine tune the NN further through a reinforcement learning (RL) algorithm.
The training setup varies based on the specific choice of the RL algorithm.
A comprehensive review of reinforcement learning is clearly beyond the scope of this work.
Interested readers may refer to the standard text on this topic~\cite{sutton2018reinforcement}.
Lastly, we remark that this phase may be skipped if the NN trained in the imitation phase is sufficiently effective for the target control application or if the RL algorithm does not bring any improvement to the trained NN.

In practice, people find that implicit MPC controllers can be well approximated by a more efficient NN controller~\cite{parisini1995receding}.
However, it is not easy to certify stability of the closed-loop system without bounding approximation errors between the NN controller and the original MPC controller. 
\begin{equation}
    \vec{x}[i+1] = f(\vec{x}[i], u_{\textnormal{NN}}(\vec{x}[i])).
    \label{eq:clsys-nn}
\end{equation}
To solve the problem, we show that by composing the NN controller with the MPC and the LQR in a hybrid control scheme, we can prove the local asymptotic stability of the closed-loop system, even if the NN is \textit{random}.
Next, we present the basic form of our method, i.e., the standard MAMPC.

\subsubsection{Standard Memory-Augmented MPC}
\label{subsubsec:std-mampc}
The hybrid control scheme combines the LQR controller and the NN controller with the original MPC controller.
At every control step, if the state is in the region of attraction of the LQR, apply the LQR controller; 
else, we simulate the closed-loop system~\eqref{eq:clsys-nn} for up to $N_{\textnormal{LQR}}$ steps: if there \textit{exists} a step $j \leq N_{\textnormal{LQR}}$ such that the state reaches within the region of attraction of LQR and that up until the $j^\text{th}$ step the simulated system does not violate any stage constraint of the MPC, apply the NN controller;
otherwise, if the state is within the admissible initial states of the MPC, apply the implicit MPC.

Formally, the hybrid controller is defined as follows
\begin{equation}
    \begin{aligned}
        &u_{\textnormal{MAMPC}}(\vec{x}) \coloneqq \nonumber \\
        &\begin{cases}
            u_{\textnormal{LQR}}(\vec{x}),  & \textnormal{if } \vec{x} \in \mathcal{R}_{\textnormal{LQR}},\\
            \hline
                                            & \textnormal{if } \vec{x} \in \mathbb{X}_{0} \setminus \mathcal{R}_{\textnormal{LQR}} \quad \textnormal{and}\\
            u_{\textnormal{NN}}(\vec{x}),   & \quad \exists i = 1, \dots, N_{\textnormal{LQR}}, \quad \vec{y}[i]\in \mathcal{R}_{\textnormal{LQR}} \quad \textnormal{and}\\
                                            & \quad \forall j = 0, \dots, i, \quad (\vec{y}[j], u_{\textnormal{NN}}(\vec{y}[j])) \in \mathbb{A},\\
            \hline
            u_{\textnormal{MPC}}(\vec{x}),  & \text{otherwise},
        \end{cases}
        \label{eq:mampc}
    \end{aligned}    
\end{equation}

where $\vec{y}$ is the \textit{simulated} state by numerically stepping: $\vec{y}[i] = f(\vec{y}[i-1], u_{\textnormal{NN}}(\vec{y}[i-1]))$ with $\vec{y}[0] = \vec{x}$, 
$\mathcal{R}_{\textnormal{LQR}}$ is designed to be a subset of $\mathbb{X}_{0}$, and 
$N_{\textnormal{LQR}} \in \mathbb{Z}_{> 0}$ is the verification horizon of $\mathcal{R}_{\textnormal{LQR}}$.

As we show in Section~\ref{sec:theoretical-analysis}, the stability of NN relies on verifying whether the system reaches within $\mathcal{R}_{\textnormal{LQR}}$ through forward simulation.
However, this approach is ineffective for systems that are sensitive to initial conditions and systems that require a large number of steps to converge to the origin.
To address these two challenges, we introduce the following two variants of MAMPC.

\subsubsection{Alternating-Authority Memory-Augmented MPC}
\label{subsubsec:aa-mampc}
The first variant of MAMPC is designed specifically for chaotic systems.
A \textit{chaotic system} is loosely defined as a system that is very sensitive to initial condition and control input.
Qualitatively speaking, the same chaotic system that begins with two slightly different initial conditions and control sequences will arrive at significantly different terminal states. 
As a result, it is very challenging to stabilize a chaotic system with any NN controller because the function approximator will inevitably produce random control errors which can easily deter the system from its stabilizing trajectory.
In this case, it is useful to modify the hybrid control scheme to bound error accumulation induced by the NN by periodically alternating between the NN controller and the MPC controller.
We name this modified version of MAMPC as \textit{alternating-authority} MAMPC, which is formally described below.
\begin{equation}
    \begin{aligned}
        &u_{\textnormal{MAMPC}}^{\textnormal{AA}}(\vec{x}) \coloneqq \nonumber \\
        &\begin{cases}
            u_{\textnormal{LQR}}(\vec{x}),  & \textnormal{if }\vec{x} \in \mathcal{R}_{\textnormal{LQR}},\\
            \hline
                                            & \textnormal{if }\vec{x} \in \mathbb{X}_{0} \setminus \mathcal{R}_{\textnormal{LQR}} \quad \textnormal{and}\\
            u_{\textnormal{NN}}(\vec{x}),   & \quad u_{\textnormal{NN}}(\vec{x}) \in \mathbb{U} \land \vec{y}[1] \in \mathbb{X} \quad \textnormal{and}\\
                                            & \quad i \Mod{i_{d}} \neq 0,\\
            \hline
            u_{\textnormal{MPC}}(\vec{x}), & \text{otherwise},
        \end{cases}
        \label{eq:mampc-variant1}
    \end{aligned}
\end{equation}
where $i_{d} \in \mathbb{Z}_{\geq 1}$ is the period of MPC defaulting and $\mathcal{R}_{\textnormal{LQR}} \subset \mathbb{X}_{0}$.

The main variation from the standard MAMPC is that we do not verify if the system falls within $\mathcal{R}_{\textnormal{LQR}}$ in $N_{\textnormal{LQR}}$ steps, as forward simulation is not as reliable for chaotic systems.
This implies that $i_{d}$ should be chosen as a small positive number.
Otherwise, prediction produced by forward simulation will be a reliable estimate of the future.
However, note that reducing $i_{d}$ will prolong running time.
Therefore, a rule of thumb of designing $i_{d}$ should be to choose it as large as the stability permits.

In addition, we remark that the closed-loop system has a smaller region of attaction compared to standard MAMPC, since we do not verify whether the system enters $\mathcal{R}_{\textnormal{LQR}}$ through forward simulation.

\subsubsection{Way-Point Memory-Augmented MPC}
\label{subsec:wp-mampc}
Another variant of MAMPC is designed specifically for slow systems.
A \textit{slow system} is loosely defined as a system that requires a substantial number of steps to steer close to the origin.
For slow systems, it is difficult to choose an effective verification horizon $N_{\textnormal{LQR}}$: a small $N_{\textnormal{LQR}}$ may be too short to predict convergence of the NN controller; a large $N_{\textnormal{LQR}}$ may incur computational overhead and thus render the hybrid control inefficient.
To address this problem, we propose a second variant of MAMPC by introducing a ``way-point'' set, $\mathcal{D}_{\textnormal{WP}}$, and enabling the NN controller to be applied as long as the anticipated trajectory falls with that way-point set.
We name this variant of MAMPC as \textit{way-point} MAMPC, which is formulated as follows.
\begin{equation}
    \begin{aligned}
        &u_{\textnormal{MAMPC}}^{\textnormal{WP}}(\vec{x}) \coloneqq \nonumber \\
        &\begin{cases}
            u_{\textnormal{LQR}}(\vec{x}), & \textnormal{if }\vec{x} \in \mathcal{R}_{\textnormal{LQR}},\\
            \hline
                                          & \textnormal{if }\vec{x} \in \mathcal{D}_{\textnormal{WP}} \setminus \mathcal{R}_{\textnormal{LQR}} \quad \textnormal{and}\\
            u_{\textnormal{NN}}(\vec{x}),  & \quad \exists i = 1, \dots N_{\textnormal{LQR}}, \quad \vec{y}[i]\in \mathcal{R}_{\textnormal{LQR}} \quad \textnormal{and}\\
                                          & \quad \forall j = 0, \dots i, \quad (\vec{y}[j], u_{\textnormal{NN}}(\vec{y}[j])) \in \mathbb{A}_{\textnormal{WP}},\\
            \hline
                                          & \textnormal{if }\vec{x} \in \mathbb{X}_{0} \setminus \mathcal{D}_{\textnormal{WP}} \quad \textnormal{and}\\
            u_{\textnormal{NN}}(\vec{x}),  & \quad \exists i \in 1, \dots, N_{\textnormal{WP}}, \quad \vec{y}[i]\in \mathcal{D}_{\textnormal{WP}} \quad \textnormal{and}\\
                                          & \quad \forall j \in 0, \dots, i, \quad (\vec{y}[j], u_{\textnormal{NN}}(\vec{y}[j])) \in \mathbb{A},\\
            \hline
            u_{\textnormal{MPC}}(\vec{x}), & \text{otherwise},
        \end{cases}
        \label{eq:mampc-variant2}
    \end{aligned}
\end{equation}
where $\mathcal{D}_{\textnormal{WP}}$ is a compact set that contains $\mathcal{R}_{\textnormal{LQR}}$, $\mathbb{A}_{\textnormal{WP}} \coloneqq \mathcal{D}_{\textnormal{WP}} \times \mathbb{U}$, and $N_{\textnormal{WP}} \in \mathbb{Z}_{>0}$ is the verification horizon of $\mathcal{D}_{\textnormal{WP}}$.

\begin{figure*}
    \centering
    \begin{tikzpicture}
        \node[inner sep=0pt] (standard) at (0,0)
            {\includegraphics[width=.3\textwidth]{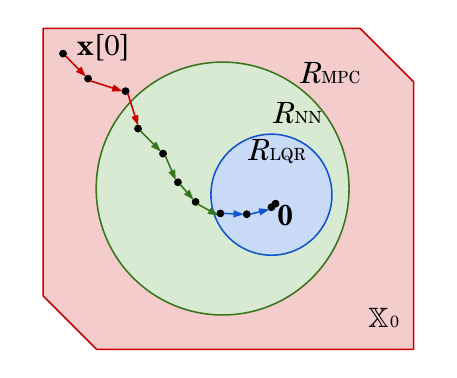}};
        \node[align=center] at (0,-2.75) {(a) Standard MAMPC};
        \node[inner sep=0pt] (variant1) at (6,0)
            {\includegraphics[width=.3\textwidth]{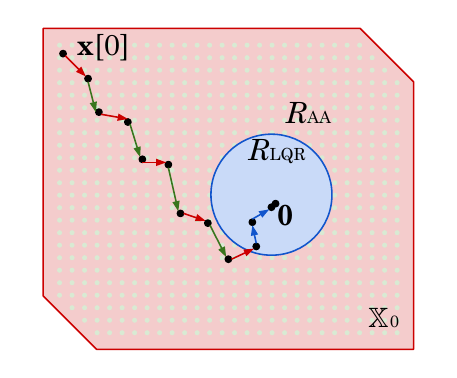}};
        \node[align=center] at (6,-2.75) {(b) Alternating-Authority MAMPC};
        \node[inner sep=0pt] (variant2) at (12,0)
            {\includegraphics[width=.3\textwidth]{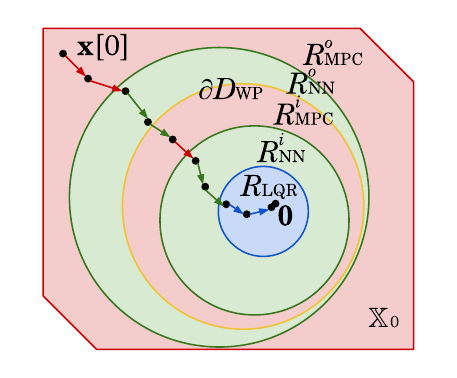}};
        \node[align=center] at (12,-2.75) {(c) Way-Point MAMPC};
    \end{tikzpicture}
    \caption{        
        An illustrative example of how different variants of MAMPC \textit{could} steer a system toward equilibrium.
        (a) Standard MAMPC: a triple-mode hybrid control composed of an MPC, a NN, and a LQR. 
        (b) Alternating-authority MAMPC: a variant of the standard MAMPC, modified for chaotic systems. 
        (c) Way-point MAMPC: a variant of the standard MAMPC, modified for slow systems. 
        Solid red regions represent $\mathbb{X}_{0}$; solid green regions represent $\mathcal{R}_{\textnormal{NN}}$; \textit{green dots} indicate periodic controller switching region in alternating-authority MAMPC; and solid blue regions represent $\mathcal{R}_{\textnormal{LQR}}$.
        A black dot represents a state and an arrow represents a controlled state transition.
        The color of arrows indicates which sub-controller is invoked during that control step: red is MPC; green is NN; and blue is LQR.
        The yellow line represents $\partial{\mathcal{D}_{\textnormal{WP}}}$.
        Note that the above examples by no means exhaustive.
        For instance, it is possible that $\mathcal{R}_{\textnormal{NN}} = \emptyset$.
    }
    \label{fig:mampc-convergence}
\end{figure*}

At minimum, we require that the way-point set contains the region of attraction of the LQR, that is, $\mathcal{R}_{\textnormal{LQR}} \subset \mathcal{D}_{\textnormal{WP}}$.
Additional improvement of the way-point set design can be performed through search.
For example, one can parameterize $\mathcal{D}_{\textnormal{WP}}$ with one or more parameters and search for a set of parameters that produces the most efficient closed-loop performance.

Besides, note that introduction of a way-point set may reduce the area of the region of attraction of the closed-loop system.

% With the three MAMPC hybrid control schemes defined above, we are ready to analyze their theoretical properties in Section~\ref{sec:theoretical-analysis}.

\section{Theoretical Analysis}
\label{sec:theoretical-analysis}

In this section, we prove the stability of the three MAMPC methods proposed above.
We also remark on the robustness of the methods and the necessity of the fail-safe MPC mode.

\subsection{Stability of MAMPC}
\label{subsec:proof-for-stability}
We show that the three MAMPC methods are locally asymptotically stable with slightly different stability properties.
We first prove that standard MAMPC is locally asymptotically stable in $\mathbb{X}_{0}$, as stated in the theorem below.
\begin{theorem}[Stability of Standard MAMPC]
    For every MPC problem of the form~\eqref{eq:mpc}, there always exists a $u_{\textnormal{MAMPC}}$ of the form~\eqref{eq:mampc}.
    Furthermore, the closed-loop system
    \begin{equation}
        \vec{x}[i+1] = f\left(\vec{x}[i], u_{\textnormal{MAMPC}}(\vec{x}[i])\right), \quad i \in \mathbb{Z}_{\geq 0},
        \label{eq:clsys-mampc}
    \end{equation}
    with $\vec{x}[0] = \vec{x}_{0} \in \mathbb{X}_{0}$ is locally asymptotically stable in $\mathbb{X}_{0}$.
    \label{thm:mampc-stability}
\end{theorem}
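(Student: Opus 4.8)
The plan is to prove this theorem in two parts, matching the two assertions in the statement: first, existence of a valid $u_{\textnormal{MAMPC}}$; second, local asymptotic stability of the closed-loop system on $\mathbb{X}_0$.

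For \textbf{existence}, I would argue that the three cases in the definition~\eqref{eq:mampc} are exhaustive and well-defined for every $\vec{x} \in \mathbb{X}_0$. The key observation is that the construction never fails to assign a control: the first two branches (LQR and NN) are guarded by explicit membership and simulation conditions that may or may not hold, but the final branch applies whenever $\vec{x} \in \mathbb{X}_0$ and neither prior condition is met. Since $\mathbb{X}_0$ is exactly the set on which problem~\eqref{eq:mpc} is feasible, $u_{\textnormal{MPC}}(\vec{x})$ is always defined there by Theorem~\ref{thm:as-mpc}. The LQR and NN controllers exist unconditionally by their constructions in Sections~\ref{subsec:lqr-from-mpc} and~\ref{subsec:nn-from-mpc}. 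Hence $u_{\textnormal{MAMPC}}$ is a well-defined map $\mathbb{X}_0 \to \mathbb{U}$.

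For \textbf{stability}, the natural approach is to reuse $J^*(\cdot)$, the optimal MPC cost, as a candidate Lyapunov function, since Theorem~\ref{thm:as-mpc} already establishes that it is a Lyapunov function for the pure-MPC closed loop and that $\mathbb{X}_0$ is positively invariant under MPC. I would partition the state space by which sub-controller is active and track the trajectory's evolution. The crucial structural fact is that the LQR region $\mathcal{R}_{\textnormal{LQR}}$ is positively invariant (Section~\ref{subsec:lqr-from-mpc}): once the state enters $\mathcal{R}_{\textnormal{LQR}}$, the LQR branch is selected at every subsequent step and $\vec{x}[i] \to \vec{0}$. So it suffices to show every trajectory starting in $\mathbb{X}_0$ reaches $\mathcal{R}_{\textnormal{LQR}}$ in finitely many steps while remaining in $\mathbb{X}_0$. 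When the MPC branch fires, the Lyapunov decrease of $J^*$ guarantees progress and keeps the state in $\mathbb{X}_0$; when the NN branch fires, the guard in~\eqref{eq:mampc} certifies by forward simulation that the state will enter $\mathcal{R}_{\textnormal{LQR}}$ within $N_{\textnormal{LQR}}$ steps without violating constraints, and—because the dynamics are deterministic—the actual trajectory coincides with the simulated $\vec{y}$, so entry into $\mathcal{R}_{\textnormal{LQR}}$ is guaranteed. The determinism linking $\vec{y}$ to the true closed-loop state is the linchpin that makes the NN branch safe despite no assumption on approximation error.

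The main obstacle I anticipate is arguing finite-time arrival at $\mathcal{R}_{\textnormal{LQR}}$ rigorously when the controller switches back and forth between MPC and NN modes across steps. I would handle this by a case analysis on the \emph{current} mode: if NN is ever selected, finite-time entry to $\mathcal{R}_{\textnormal{LQR}}$ follows immediately from the guard plus determinism (and one must verify the LQR branch is then selected and retained by positive invariance). If NN is never selected from some step onward, the trajectory evolves purely under MPC, and standard MPC theory (the strict decrease of $J^*$ together with positive definiteness of the cost and a LaSalle-type argument) forces convergence toward the origin, hence eventual entry into any neighborhood of $\vec{0}$, in particular into the nonempty $\mathcal{R}_{\textnormal{LQR}}$. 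A subtle point to check is that under MPC operation the state cannot "skip over" $\mathcal{R}_{\textnormal{LQR}}$ indefinitely; this is resolved because $\mathcal{R}_{\textnormal{LQR}}$ contains a neighborhood of the origin and MPC drives $\vec{x}[i] \to \vec{0}$, so entry is unavoidable. Finally, I would assemble these pieces into a formal Lyapunov/attractivity statement establishing both stability in the sense of Lyapunov near the origin (inherited from the LQR and MPC local stability) and global attractivity within $\mathbb{X}_0$, which together give local asymptotic stability in $\mathbb{X}_0$ as claimed.
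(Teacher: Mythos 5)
Your proposal is correct and follows essentially the same route as the paper's proof: existence via the unconditional constructions of the LQR and NN sub-controllers plus feasibility of MPC on $\mathbb{X}_{0}$, and stability via partitioning $\mathbb{X}_{0}$ by active mode, positive invariance of $\mathcal{R}_{\textnormal{LQR}}$, the forward-simulation guard plus determinism for the NN branch, and MPC's asymptotic stability forcing eventual entry into $\mathcal{R}_{\textnormal{LQR}}$. If anything, your explicit treatment of the mode-switching subtlety and of why MPC cannot ``skip over'' $\mathcal{R}_{\textnormal{LQR}}$ spells out details the paper's proof leaves implicit.
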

\begin{proof}
    See Appendix~\ref{app:mampc-proof}.
\end{proof}

To provide geometric intuition on how a MAMPC achieves local asymptotic stability, we illustrate schematically in Figure~\ref{fig:mampc-convergence}a how a MAMPC \textit{could} steer a system from an initial state to the equilibrium at zero.
A state in $\mathcal{R}_{\textnormal{MPC}}$ is first steered by $u_{\textnormal{MPC}}$ into $\mathcal{R}_{\textnormal{NN}}$, which is then steered by $u_{\textnormal{NN}}$ into $\mathcal{R}_{\textnormal{LQR}}$, after which is steered to zero by $u_{\textnormal{LQR}}$.
Note that the example demonstrated in Figure~\ref{fig:mampc-convergence}a is only a hypothetical particular case of MAMPC closed-loop behaviors.

Next, for alternating-authority MAMPC, the closed-loop system no longer has guaranteed local asymptotic stability in $\mathbb{X}_{0}$, because it does not check whether NN shoots inside $\mathcal{R}_{\textnormal{LQR}}$.
Rather, we can only show that the system is locally asymptotically stable in $\mathcal{R}_{\textnormal{LQR}}$ and will never escape $\mathbb{X}_{0}$, as stated the theorem below.
\begin{theorem}[Stability of Alternating-Authority MAMPC]
    For every MPC problem of the form~\eqref{eq:mpc}, there always exists a $u_{\textnormal{MAMPC}}^{\textnormal{AA}}$ of the form~\eqref{eq:mampc-variant1}.
    Furthermore, the closed-loop system 
    \begin{equation}
        \vec{x}[i+1] = f\left(\vec{x}[i], u_{\textnormal{MAMPC}}^{\textnormal{AA}}(\vec{x}[i])\right), \quad i \in \mathbb{Z}_{\geq 0},
        \label{eq:clsys-aa-mampc}
    \end{equation}
    with $\vec{x}[0] = \vec{x}_{0} \in \mathbb{X}_{0}$ is locally asymptotically stable in $\mathcal{R}_{\textnormal{LQR}}$. 
    Besides, for every initial condition $\vec{x}[0] \in \mathbb{X}_{0}$, $\vec{x}[i] \in \mathbb{X}_{0}$ for all $i \in \mathbb{Z}_{+}$.
    \label{thm:aa-mampc-stability}
\end{theorem}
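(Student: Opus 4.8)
The plan is to follow the same two-part template used for Theorem~\ref{thm:mampc-stability}, first settling existence of $u_{\textnormal{MAMPC}}^{\textnormal{AA}}$ and then the two stability assertions, but with the global region of attraction replaced by the weaker pair of guarantees stated here. Existence is immediate and identical to the standard case: because $f$ is continuously differentiable and the linearized pair $(\mat{A}, \mat{B})$ is stabilizable, the relaxed program~\eqref{eq:lqr} yields a well-defined $u_{\textnormal{LQR}}$ with nonempty, positively invariant $\mathcal{R}_{\textnormal{LQR}}$, while $u_{\textnormal{NN}}$ exists trivially since no property beyond being a neural network is imposed. Hence a controller of the form~\eqref{eq:mampc-variant1} can always be assembled from any MPC of the form~\eqref{eq:mpc}.

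For local asymptotic stability in $\mathcal{R}_{\textnormal{LQR}}$, I would use that this set both triggers the LQR branch unconditionally and is positively invariant under~\eqref{eq:clsys-lqr}. Concretely, whenever $\vec{x}[i] \in \mathcal{R}_{\textnormal{LQR}}$ the first branch of~\eqref{eq:mampc-variant1} selects $u_{\textnormal{LQR}}$, so the dynamics coincide with~\eqref{eq:clsys-lqr}, and by the positive invariance recalled in Section~\ref{subsec:lqr-from-mpc} the state remains in $\mathcal{R}_{\textnormal{LQR}}$ and is driven by the LQR at every subsequent step. Since~\eqref{eq:clsys-lqr} is locally asymptotically stable with region of attraction $\mathcal{R}_{\textnormal{LQR}}$, the trajectory converges to the origin, which is exactly the claimed local asymptotic stability.

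The invariance claim I would prove by induction on $i$, verifying that $\vec{x}[i] \in \mathbb{X}_{0}$ forces $\vec{x}[i+1] \in \mathbb{X}_{0}$ through a case split on the active branch of~\eqref{eq:mampc-variant1}. If $\vec{x}[i] \in \mathcal{R}_{\textnormal{LQR}}$, positive invariance gives $\vec{x}[i+1] \in \mathcal{R}_{\textnormal{LQR}} \subseteq \mathbb{X}_{0}$. If the NN branch fires, then by construction its guard $\vec{y}[1] \in \mathbb{X}$ holds and $\vec{x}[i+1] = \vec{y}[1]$, so $\vec{x}[i+1] \in \mathbb{X} \subseteq \mathbb{X}_{0}$ by the well-posedness requirement $\mathbb{X} \subseteq \mathbb{X}_{0}$. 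In the remaining ``otherwise'' case $u_{\textnormal{MPC}}$ is invoked at a state in $\mathbb{X}_{0}$, and since the hypotheses of Theorem~\ref{thm:as-mpc} make $\mathbb{X}_{0}$ a positively invariant region of attraction for~\eqref{eq:clsys-mpc}, again $\vec{x}[i+1] \in \mathbb{X}_{0}$. The induction then yields $\vec{x}[i] \in \mathbb{X}_{0}$ for all $i$.

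The main difficulty is conceptual rather than computational: because the alternating-authority scheme deliberately drops the forward-simulation test that the NN actually reaches $\mathcal{R}_{\textnormal{LQR}}$, one cannot expect attraction to the origin from all of $\mathbb{X}_{0}$ as in the standard case, so the correct target is precisely the weaker pair above. The delicate observation is that the single-step guard $\vec{y}[1] \in \mathbb{X}$, together with $\mathbb{X} \subseteq \mathbb{X}_{0}$, is exactly what prevents the NN branch from ejecting the state out of $\mathbb{X}_{0}$; notably, the periodic defaulting condition $i \Mod{i_{d}} = 0$ plays no role in either guarantee and serves only to bound NN error accumulation in practice, so the proof does not rely on it.
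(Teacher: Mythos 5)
Your proposal is correct and follows essentially the same route as the paper's proof: the same existence argument, local asymptotic stability in $\mathcal{R}_{\textnormal{LQR}}$ via the LQR branch and positive invariance, and a three-way case split (LQR invariance, the NN guard $\vec{y}[1]\in\mathbb{X}\subseteq\mathbb{X}_{0}$, and persistent feasibility of the MPC) to show the state never leaves $\mathbb{X}_{0}$. The only cosmetic difference is that you phrase the containment argument as a direct induction while the paper argues by contradiction with a hypothetical escaping control, and your closing observation that the period $i_{d}$ plays no role in the guarantees is consistent with the paper's proof, which likewise never invokes it.
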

\begin{proof}
    See Appendix~\ref{app:aa-mampc-proof}.
\end{proof}

An illustration of how the alternating-authority MAMPC \textit{could} steer a system to the equilibrium is provided in Figure~\ref{fig:mampc-convergence}b.
Note that marginal stability may be an understatement in practice:
a well-trained NN, combined with a modest alternation period $i_{d}$, could very well result in a closed-loop system that is asymptotically stable.

If analytical guarantee of asymptotic stability in the \textit{entire} $\mathbb{X}_{0}$ is absolutely required, one can replace the admissible control set $\mathbb{X}_{0}$ with a time-varying set $\mathbb{S}[i]$, where 
\begin{equation}
    \mathcal{R}_{\textnormal{LQR}} \subset \mathbb{S}[i+1] \subset \mathbb{S}[i] \subset \mathbb{X}_{0}, \quad \forall i = 1, \dots, n-1,
    \label{eq:shrink}
\end{equation}
for some $n \in \mathbb{Z}_{>1}$ with $\mathbb{S}[0] = \mathbb{X}_{0}$ and $\mathbb{S}[n] = \mathcal{R}_{\textnormal{LQR}}$.
Geometrically, $\mathbb{S}$ can be viewed as a set that gets successively shrunk from $\mathbb{X}_{0}$ toward $\mathcal{R}_{\textnormal{LQR}}$.
Such design will invoke the fail-safe MPC if the systems is ``stuck'' outside of $\mathcal{R}_{\textnormal{LQR}}$.

Lastly, similar to alternating-authority MAMPC, the closed-loop system of a way-point MAMPC is locally asymptotically stable in $\mathcal{R}_{\textnormal{LQR}}$ and will eventually enter and stay within a level set that \textit{escribes} $\mathcal{D}_{\textnormal{WP}}$, as stated in the following theorem.
\begin{theorem}[Stability of Way-Point MAMPC]
    For every MPC problem of the form~\eqref{eq:mpc}, there always exists a $u_{\textnormal{MAMPC}}^{\textnormal{WP}}$ of the form~\eqref{eq:mampc-variant2}.
    Furthermore, the closed-loop system 
    \begin{equation}
        \vec{x}[i+1] = f\left(\vec{x}[i], u_{\textnormal{MAMPC}}^{\textnormal{WP}}(\vec{x}[i])\right), \quad i \in \mathbb{Z}_{\geq 0},
        \label{eq:clsys-wp-mampc}
    \end{equation}
    with $\vec{x}[0] = \vec{x}_{0} \in \mathbb{X}_{0}$ is locally asymptotically stable in $\mathcal{R}_{\textnormal{LQR}}$.
    Besides, there exists a time index $j \in \mathbb{Z}_{\geq 0}$ such that for all $i \geq j$
    \begin{equation*}
        \vec{x}[i] \in \mathcal{D}_{\textnormal{WP}}^{+} \coloneqq \{\vec{x} \in \mathbb{R}^{n} \mid V(\vec{x}) < \max_{q \in \partial{\mathcal{D}_{\textnormal{WP}}}} V(\vec{x})\},
    \end{equation*}
    where $V: \mathbb{R}^{n} \rightarrow \mathbb{R}_{+}$ is a Lyapunov function of the closed-loop system~\eqref{eq:clsys-mpc} and $\partial{\mathcal{D}_{\textnormal{WP}}}$ is the boundary of the compact set $\mathcal{D}_{\textnormal{WP}}$ that by construction contains $\mathcal{R}_{\textnormal{LQR}}$.
    \label{thm:wp-mampc-stability}
\end{theorem}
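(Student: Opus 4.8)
The plan is to follow the same three-stage template as Theorems~\ref{thm:mampc-stability} and~\ref{thm:aa-mampc-stability}: first dispatch existence, then local asymptotic stability in $\mathcal{R}_{\textnormal{LQR}}$, and finally the new containment claim. Existence of $u_{\textnormal{MAMPC}}^{\textnormal{WP}}$ is verbatim the argument of Theorem~\ref{thm:mampc-stability} (the LQR is always derivable from the stabilizable linearization, the NN exists trivially), together with the observation that a compact way-point set $\mathcal{D}_{\textnormal{WP}} \supset \mathcal{R}_{\textnormal{LQR}}$ can always be chosen. For local asymptotic stability in $\mathcal{R}_{\textnormal{LQR}}$ I would reuse the first paragraph of the proof of Theorem~\ref{thm:aa-mampc-stability}: for $\vec{x}[0] \in \mathcal{R}_{\textnormal{LQR}}$ the controller~\eqref{eq:mampc-variant2} selects $u_{\textnormal{LQR}}$, so the closed loop coincides with~\eqref{eq:clsys-lqr}, which is locally asymptotically stable with positively invariant region of attraction $\mathcal{R}_{\textnormal{LQR}}$.

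For the containment claim I would set $V \coloneqq J^{*}$, the MPC value function, which by Theorem~\ref{thm:as-mpc} is a Lyapunov function for~\eqref{eq:clsys-mpc} and hence non-increasing along every MPC step, and let $c \coloneqq \max_{q \in \partial \mathcal{D}_{\textnormal{WP}}} V(q)$, attained since $\mathcal{D}_{\textnormal{WP}}$ is compact and $V$ continuous, so that $\mathcal{D}_{\textnormal{WP}}^{+} = \{V < c\}$. Since $V$ is positive definite with nested sublevel sets and $\mathcal{D}_{\textnormal{WP}}$ contains the neighborhood $\mathcal{R}_{\textnormal{LQR}}$ of the origin, I would first record the geometric fact $\mathcal{D}_{\textnormal{WP}} \subseteq \mathcal{D}_{\textnormal{WP}}^{+}$. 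The argument then splits into entry and invariance. For entry I would show the state reaches $\mathcal{D}_{\textnormal{WP}}$ in finite time: whenever the outer NN branch fires, the forward simulation is exact for the deterministic map $f$, so the realized trajectory equals $\vec{y}$ and enters $\mathcal{D}_{\textnormal{WP}}$ within $N_{\textnormal{WP}}$ steps (the forward-simulation-consistency argument, identical in spirit to the NN paragraph of Theorem~\ref{thm:mampc-stability}); otherwise $u_{\textnormal{MPC}}$ fires, and since~\eqref{eq:clsys-mpc} is asymptotically stable, $V$ strictly decreases and drives the state into $\mathcal{R}_{\textnormal{LQR}} \subset \mathcal{D}_{\textnormal{WP}}$ in finite time.

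For invariance after the first entry time $j$, I would check the three modes: $u_{\textnormal{LQR}}$ keeps the state in $\mathcal{R}_{\textnormal{LQR}} \subseteq \mathcal{D}_{\textnormal{WP}}^{+}$; the inner NN branch is guarded by $\mathbb{A}_{\textnormal{WP}} = \mathcal{D}_{\textnormal{WP}} \times \mathbb{U}$, so its verified successor $\vec{y}[1]$ lies in $\mathcal{D}_{\textnormal{WP}} \subseteq \mathcal{D}_{\textnormal{WP}}^{+}$; and $u_{\textnormal{MPC}}$ leaves $V$ non-increasing, so from $\{V < c\}$ the successor stays in $\{V < c\} = \mathcal{D}_{\textnormal{WP}}^{+}$. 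Taking $j$ to be the first entry time and combining these with the entry argument would give $\vec{x}[i] \in \mathcal{D}_{\textnormal{WP}}^{+}$ for all $i \geq j$.

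The main obstacle is the one transition this bookkeeping does not cover cleanly: an MPC step may push the state out of $\mathcal{D}_{\textnormal{WP}}$ into the shell $\mathcal{D}_{\textnormal{WP}}^{+} \setminus \mathcal{D}_{\textnormal{WP}}$ (still with $V < c$), and from there the outer NN branch is only guarded by $\mathbb{A} = \mathbb{X} \times \mathbb{U}$, which does not bound $V$ along the transient, so an adversarial NN could momentarily raise $V$ above $c$ before the verified re-entry into $\mathcal{D}_{\textnormal{WP}}$. Closing this gap is where I expect the real effort to lie. I would resolve it in one of two ways: either (i) strengthen the invariance by exploiting that after an MPC exit the state carries the strictly decreased value $V < c$ and arguing, via the finite re-entry horizon $N_{\textnormal{WP}}$ and continuity of $f$, that the excursion is absorbed by the sublevel set once $c$ is taken as the maximum over $\partial \mathcal{D}_{\textnormal{WP}}$; or (ii) tighten the outer NN guard to $\mathbb{A}_{\textnormal{WP}}$, so that shell excursions never leave $\mathcal{D}_{\textnormal{WP}}^{+}$, at the price of a smaller region of attraction. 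The remaining pieces mirror the two preceding theorems; the delicate point is precisely the confinement of the outer NN excursions by the MPC Lyapunov function $V$.
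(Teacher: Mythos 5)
Your proposal is, in substance, the paper's own proof. The paper also splits the argument into existence (inherited verbatim from Theorem~\ref{thm:mampc-stability}), local asymptotic stability in $\mathcal{R}_{\textnormal{LQR}}$ via the LQR mode, an entry argument (if the outer NN branch fires, the system enters $\mathcal{D}_{\textnormal{WP}}$ within $N_{\textnormal{WP}}$ steps; otherwise the MPC, being locally asymptotically stable, reaches $\mathcal{D}_{\textnormal{WP}}$ in finite time), and an invariance argument that examines, mode by mode, a hypothetical step that exits $\mathcal{D}_{\textnormal{WP}}^{+}$: MPC cannot produce it because $V$ is non-increasing along~\eqref{eq:clsys-mpc}, NN cannot because ``invocation of NN implies the successor lies in $\mathcal{D}_{\textnormal{WP}}$,'' and LQR cannot by asymptotic stability in $\mathcal{R}_{\textnormal{LQR}}$. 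The paper phrases the invariance step as a proof by contradiction while yours is a direct check, but the content is identical, including the instantiation $V = J^{*}$ licensed by Theorem~\ref{thm:as-mpc}.

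The obstacle you flag is not an artifact of your bookkeeping; it is a genuine hole, and the paper's proof has it too. The paper's NN case rests on the claim that whenever NN is invoked, the one-step successor lies in $\mathcal{D}_{\textnormal{WP}}$. That is true for the inner branch, whose guard $\mathbb{A}_{\textnormal{WP}} = \mathcal{D}_{\textnormal{WP}} \times \mathbb{U}$ forces $\vec{y}[1] \in \mathcal{D}_{\textnormal{WP}}$, but false for the outer branch, which promises only re-entry within $N_{\textnormal{WP}}$ steps while staying in $\mathbb{A} = \mathbb{X} \times \mathbb{U}$. Since an MPC step from inside $\mathcal{D}_{\textnormal{WP}}$ guarantees only a decrease of $V$, and sublevel sets of $V$ need not sit inside $\mathcal{D}_{\textnormal{WP}}$, the state can legitimately land in the shell $\mathcal{D}_{\textnormal{WP}}^{+} \setminus \mathcal{D}_{\textnormal{WP}}$, from which the outer NN branch can fire and transiently push $V$ above the threshold---exactly your scenario; the paper simply asserts this case away. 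Two remarks on your proposed repairs: fix (ii) as written would disable the outer branch altogether, because its guard would demand $\vec{y}[0] = \vec{x} \in \mathcal{D}_{\textnormal{WP}}$ while that branch presumes $\vec{x} \in \mathbb{X}_{0} \setminus \mathcal{D}_{\textnormal{WP}}$; the workable version is to guard the outer branch with membership in $\mathcal{D}_{\textnormal{WP}}^{+}$ (equivalently, the sublevel set $\{V < \max_{\vec{q} \in \partial\mathcal{D}_{\textnormal{WP}}} V(\vec{q})\}$) rather than with $\mathbb{A}_{\textnormal{WP}}$. Finally, note that both you and the paper rely on the containment $\mathcal{D}_{\textnormal{WP}} \subseteq \mathcal{D}_{\textnormal{WP}}^{+}$, which for an arbitrary compact $\mathcal{D}_{\textnormal{WP}}$ and an arbitrary Lyapunov function $V$ is not automatic ($V$ may exceed its boundary maximum at interior points); it is an additional unstated hypothesis in both arguments.
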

\begin{proof}
    See Appendix~\ref{app:wp-mampc-proof}.
\end{proof}

An illustration of how the way-point MAMPC \textit{could} steer a system to the equilibrium is provided in Figure~\ref{fig:mampc-convergence}c.
As in the alternating-authority MAMPC, to always achieve asymptotic stability, one can successively shrink $\mathcal{D}_{\textnormal{WP}}$ until $\mathcal{D}_{\textnormal{WP}} = \mathcal{R}_{\textnormal{LQR}}$ in the same way as presented in equation \eqref{eq:shrink}.

\subsection{Remark on Robustness}
\label{subsec:remark-on-robustness}

MAMPC may be \textit{robustified} to account for model uncertainties using the idea of bounding error margin.
We demonstrate such robustification procedure by applying it to the standard MAMPC.

Let the \textit{true} system dynamics be $f$.
We distinguish $\bar{f}$ as the \textit{model} of the dynamical system.
Define the model uncertainties of a controller $u$ with initial condition $\vec{x}[i]$ at time step $i+k$ as
\begin{equation*}
    \Delta \vec{x}_{u}[k|i] \coloneqq \vec{y}[k] - \bar{\vec{y}}[k],
\end{equation*}
where
\begin{equation*}
    \begin{aligned}
        \vec{y}[0] &= \vec{x}[i], \quad \vec{y}[k] = f(\vec{y}[k-1], u(\vec{y}[k-1])),\\
        \bar{\vec{y}}[0] &= \vec{x}[i], \quad \bar{\vec{y}}[k] = \bar{f}(\bar{\vec{y}}[k-1], u(\bar{\vec{y}}[k-1])).
    \end{aligned}
\end{equation*}

When $\Vert \Delta \vec{x}_{u}[k|i] \Vert = \alpha > 0$, it is possible that $\vec{y}[k] \notin \mathcal{R}_{\textnormal{LQR}}$ but $\bar{\vec{y}}[k] \in \mathcal{R}_{\textnormal{LQR}}$, potentially leading to a failure of MAMPC.
To address this problem, we define a set operator $\Ero_{\delta}$ as follows
\begin{equation*}
    \Ero_{\delta}(\mathcal{X}) \coloneqq \{ \vec{x} \in \mathcal{X} \mid \Vert \vec{x} - \vec{q} \Vert \geq \delta, \forall \vec{q} \in \partial\mathcal{X}\} \textnormal{ for some } \delta > 0,
\end{equation*}
where $\mathcal{X}$ is an arbitrary set and $\partial\mathcal{X}$ denotes the boundary of the set $\mathcal{X}$.
We claim that for $\Vert \Delta \vec{x}_{u}[k|i] \Vert = \alpha > 0$, if $\bar{\vec{y}}[k] \in \Ero_{\delta}(\mathcal{R}_{\textnormal{LQR}})$ with $\delta \geq \alpha$, then there must be $\vec{y}[k] \in \mathcal{R}_{\textnormal{LQR}}$.
Proof of the claim is a direct application of the triangle inequality of norm.
We can therefore incorporate the set operator $\Ero_{\delta}$ to enhance the robustness of MAMPC with respect to model uncertainties.

We can apply the above procedure to robustify the standard MAMPC as follows
\begin{equation}
    \begin{aligned}
        &u_{\textnormal{MAMPC}}^{\delta}(\vec{x}) \coloneqq \nonumber \\
        &\begin{cases}
            u_{\textnormal{LQR}}(\vec{x}), & \vec{x} \in \mathcal{R}_{\textnormal{LQR}},\\
            \hline
                                           & \vec{x} \in \mathbb{X}_{0} \setminus \mathcal{R}_{\textnormal{LQR}},\\
            u_{\textnormal{NN}}(\vec{x}),  & \exists i = 1, \dots, N_{\textnormal{LQR}}, \quad \bar{\vec{y}}[i]\in \mathcal{R}_{\textnormal{LQR}}^{\delta},\\
                                           & \forall j = 0, \dots, j, \quad (\bar{\vec{y}}[j], u_{\textnormal{NN}}(\bar{\vec{y}}[j])) \in \mathbb{A}^{\delta}\\
            \hline
            u_{\textnormal{MPC}}(\vec{x}), & \text{otherwise},
        \end{cases}
    \end{aligned}
\end{equation}
where $\mathcal{R}_{\textnormal{LQR}}^{\delta} \coloneqq \Ero_{\delta}(\mathcal{R}_{\textnormal{LQR}})$ and $\mathbb{A}^{\delta} \coloneqq \Ero_{\delta}(\mathbb{X}) \times \mathbb{U}$, for some $\delta > 0$.
The stability of the above robustified MAMPC is stated in the following theorem.
\begin{theorem}[Robustification of Standard MAMPC]
    Suppose the model uncertainties are upper bounded by some $\alpha > 0$, that is,
    \begin{equation*}
        \Vert \Delta \vec{x}_{u}[k|i] \Vert \leq \alpha, \quad \forall i \in \mathbb{Z}_{\geq 0}, \forall k = 0, \dots, N_{\textnormal{LQR}}.
    \end{equation*}
    Then the following closed-loop system
    \begin{equation*}
        \vec{x}[i+1] = f(\vec{x}[i], u_{\textnormal{MAMPC}}^{\alpha}(\vec{x}[i])), \quad i \in \mathbb{Z}_{\geq 0},
    \end{equation*}
    with $\vec{x}[0] = \vec{x}_{0} \in \mathbb{X}_{0}$ is locally asymptotically stable in $\mathbb{X}_{0}$.
    \label{thm:robust-mampc}
\end{theorem}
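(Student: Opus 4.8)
The plan is to adapt the proof of Theorem~\ref{thm:mampc-stability} by replacing the exact set memberships used in the switching rule with their eroded counterparts and by bridging the \emph{model} prediction to the \emph{true} trajectory via the robustness claim established just above the statement. Existence of $u_{\textnormal{MAMPC}}^{\alpha}$ is immediate and identical to the existence argument in Theorem~\ref{thm:mampc-stability}: the LQR and NN ingredients always exist, and erosion by the fixed radius $\delta = \alpha$ merely shrinks $\mathcal{R}_{\textnormal{LQR}}$ and $\mathbb{X}$, so the three-way case split defining $u_{\textnormal{MAMPC}}^{\alpha}$ remains well posed.

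The crux is the robustness claim instantiated with $\delta = \alpha$. Since the uncertainty obeys $\Vert \Delta \vec{x}_{u}[k|i] \Vert \le \alpha$ over the whole verification horizon, the triangle inequality yields the following transfer principle: whenever the model trajectory satisfies $\bar{\vec{y}}[k] \in \Ero_{\alpha}(\mathcal{X})$, the true trajectory satisfies $\vec{y}[k] \in \mathcal{X}$. The claim is stated for $\mathcal{R}_{\textnormal{LQR}}$, but because its proof is purely the triangle inequality it applies verbatim to any set, in particular to $\mathbb{X}$. Applying it to $\mathcal{R}_{\textnormal{LQR}}^{\alpha} = \Ero_{\alpha}(\mathcal{R}_{\textnormal{LQR}})$ and to $\Ero_{\alpha}(\mathbb{X})$ converts each model-side condition in the definition of $u_{\textnormal{MAMPC}}^{\alpha}$ into its true-side counterpart: if the switching test certifies $\bar{\vec{y}}[i] \in \mathcal{R}_{\textnormal{LQR}}^{\alpha}$ and $(\bar{\vec{y}}[j], u_{\textnormal{NN}}(\bar{\vec{y}}[j])) \in \mathbb{A}^{\alpha}$ for all $j \le i$, then the actual closed-loop trajectory satisfies $\vec{y}[i] \in \mathcal{R}_{\textnormal{LQR}}$ and $(\vec{y}[j], u_{\textnormal{NN}}(\vec{y}[j])) \in \mathbb{A}$.

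With this transfer principle in hand, I would partition $\mathbb{X}_{0} = \mathcal{R}_{\textnormal{LQR}} \cup \mathcal{R}_{\textnormal{NN}}^{\alpha} \cup \mathcal{R}_{\textnormal{MPC}}^{\alpha}$ exactly as in Theorem~\ref{thm:mampc-stability}, where $\mathcal{R}_{\textnormal{NN}}^{\alpha}$ is the set on which the robustified NN test passes. On $\mathcal{R}_{\textnormal{LQR}}$ the LQR argument is unchanged. On $\mathcal{R}_{\textnormal{NN}}^{\alpha}$, the transfer principle shows that applying the NN genuinely drives the \emph{true} state into $\mathcal{R}_{\textnormal{LQR}}$ within $N_{\textnormal{LQR}}$ steps while the true trajectory stays in $\mathbb{X} \subseteq \mathbb{X}_{0}$, after which the LQR takes it to the origin. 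On $\mathcal{R}_{\textnormal{MPC}}^{\alpha}$ the fail-safe MPC is persistently feasible and asymptotically stable in $\mathbb{X}_{0}$, so it steers the state into $\mathcal{R}_{\textnormal{LQR}} \cup \mathcal{R}_{\textnormal{NN}}^{\alpha}$ in finite time. Invariance of $\mathbb{X}_{0}$ follows as in Theorem~\ref{thm:aa-mampc-stability}: the LQR and MPC branches never leave $\mathbb{X}_{0}$, and for the NN branch the constraint part of the transfer principle at the first step gives $\vec{y}[1] \in \mathbb{X} \subseteq \mathbb{X}_{0}$.

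I expect the main obstacle to be reconciling the \emph{open-loop} forward simulation on which the switching rule is based with the \emph{closed-loop} execution that re-evaluates the controller at every step. The bound $\alpha$ is stated per simulation and is reset at each actual state $\vec{x}[i]$, so a trajectory certified to reach $\mathcal{R}_{\textnormal{LQR}}^{\alpha}$ from $\vec{x}[i]$ need not remain certified when the test is re-run from the perturbed successor, and in principle the system could alternate between the NN and MPC branches without the one-shot guarantee ever composing into convergence. I would close this gap by leaning on the fail-safe structure rather than on continuous NN application: whenever the robustified NN test fails at a re-evaluation, the constraint part of the transfer principle still places the current true state in $\mathbb{X} \subseteq \mathbb{X}_{0}$, so the MPC branch is persistently feasible there and, being asymptotically stable in $\mathbb{X}_{0}$ with $J^{*}$ as a Lyapunov function, strictly reduces $J^{*}$ and eventually delivers the state into $\mathcal{R}_{\textnormal{LQR}}$. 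Hence every branch is individually safe, no branch escapes $\mathbb{X}_{0}$, and the composite closed-loop system inherits local asymptotic stability in $\mathbb{X}_{0}$.
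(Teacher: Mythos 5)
Your route is the same one the paper takes---which is easy to verify, because the paper's entire proof of Theorem~\ref{thm:robust-mampc} is the single assertion that it is ``identical to that of Theorem~\ref{thm:mampc-stability},'' plus a remark (which you omit, but it is minor) that $\mathcal{R}_{\textnormal{LQR}}^{\delta}$ may be empty, in which case the scheme degenerates to a dual-mode MPC--LQR controller. Your write-up is in fact more careful than the paper's: you make the triangle-inequality transfer principle explicit, and, crucially, you are the one to notice that the certificate is computed \emph{open-loop} on the model from $\vec{x}[i]$ while execution is \emph{closed-loop} and re-evaluated at the true, perturbed successor, so the exact-model argument of Theorem~\ref{thm:mampc-stability} does not port over verbatim.

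The problem is that your patch for that gap does not close it. You argue that whenever the robustified NN test fails at a re-evaluation, the fail-safe MPC takes over and, $J^{*}$ being a Lyapunov function, ``strictly reduces $J^{*}$ and eventually delivers the state into $\mathcal{R}_{\textnormal{LQR}}$.'' That conclusion needs the MPC to be applied at \emph{every} step from some time onward. In Theorem~\ref{thm:mampc-stability} this follows from a tail property: with an exact model, once the NN test passes at $\vec{x}[i]$, the next closed-loop state equals the simulated one, so the test passes again with a witness horizon shorter by one; hence the mode sequence is MPC, then NN, then LQR, with no interleaving. Under model mismatch with erosion radius $\delta=\alpha$ the tail property fails: the re-run simulation starts from the true successor, whose model trajectory is only within $2\alpha$ of the previous certified one, so the test can fail, and the closed loop can switch NN, MPC, NN, MPC, $\dots$ indefinitely. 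Each certified NN step constrains only the \emph{model} trajectory; since the NN is arbitrary, a single true NN step can move the state anywhere near $\Ero_{\alpha}(\mathbb{X})$ and increase $J^{*}$ by more than the subsequent MPC steps decrease it, so the Lyapunov decrease does not compose into convergence. Closing this requires an extra ingredient---e.g., committing to the NN for its whole certified horizon once the test passes (consecutive NN invocations do reach $\mathcal{R}_{\textnormal{LQR}}$ within $N_{\textnormal{LQR}}$ steps, since the closed loop then coincides with the true open-loop NN trajectory), a shrinking-set device as in~\eqref{eq:shrink}, or an assumption that makes the certificate robustly re-verifiable along true trajectories. Two smaller points: the constraint transfer is exact only in the state component, since $u_{\textnormal{NN}}(\vec{y}[j])$ need not equal $u_{\textnormal{NN}}(\bar{\vec{y}}[j])$ and $\mathbb{U}$ is not eroded (only the $j=0$ input, the one actually applied, is certified to lie in $\mathbb{U}$); and stability of the MPC and LQR branches under the \emph{true} dynamics $f$ is itself an unproven assumption once $f \neq \bar{f}$, since Theorem~\ref{thm:as-mpc} certifies the model closed loop. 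To be fair, every one of these holes is present, silently, in the paper's one-line proof; your version at least surfaces the main one.
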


\begin{proof}
    The proof is identical to that of Theorem~\ref{thm:mampc-stability}.
    Note that it possible that $\mathcal{R}_{\textnormal{LQR}}^{\delta} = \emptyset$.
    In this case, the hybrid control is virtually equivalent to a dual-mode MPC-LQR controller.
\end{proof}
Similar to the above example, the robustification procedure can also be applied to enhance the alternating-authority MAMPC and way-point MAMPC.

\subsection{Remark on Necessity of Fail-Safe MPC}
\label{subsec:remark-on-necessity-of-fail-safe-mpc}
The hybrid control scheme can be clearly more efficient if we can skip the forward verification and remove the fail-safe MPC mode.
In light of this observation, we describe a condition, where the fail-safe MPC mode will \textit{never} be invoked and thus can be safely removed from the hybrid control scheme.

Define the following ``fail-free'' MAMPC
\begin{equation*}
    \begin{aligned}
        u_{\textnormal{MAMPC}}^{+}(\vec{x}) \coloneqq 
        \begin{cases}
            u_{\textnormal{LQR}}(\vec{x}),       & \vec{x} \in \mathcal{R}_{\textnormal{LQR}},\\
            u_{\textnormal{NN}}(\vec{x})         & \vec{x} \in \mathbb{X}_{0} \setminus \mathcal{R}_{\textnormal{LQR}}.
        \end{cases}
    \end{aligned}
\end{equation*}
Let $L: \mathbb{X} \rightarrow \mathbb{R}_{+}$ be an augmented cumulative cost function of NN, that is,
\begin{equation*}
    L(\vec{x}) \coloneqq \sum_{k=0}^{N-1}{c(\vec{y}[k], u_{\textnormal{NN}}(\vec{y}[k]))} + c_{f}(\vec{y}[N]),
\end{equation*}
with
\begin{equation*}
    \vec{y}[0] = \vec{x}, \quad \vec{y}[k] = f(\vec{y}[k-1], u_{\textnormal{NN}}(\vec{y}[k-1])).
\end{equation*}

From Theorem 13.1 in~\cite{borrelli2017predictive}, we are ready to state the condition in the following theorem.
\begin{theorem}[Fail-Free MAMPC]
    If there exists some nonnegative function $\gamma: \mathbb{X}_{0} \rightarrow \mathbb{R}_{+}$ such that
    \begin{equation}
        \begin{aligned}
            (\vec{x}, u_{\textnormal{NN}}(\vec{x})) \in \mathbb{A}, \quad &\forall \vec{x} \in \mathbb{X}_{0}, \\
            L(\vec{x}) \leq J^{*}(\vec{x}) + \gamma(\vec{x}), \quad &\forall \vec{x} \in \mathbb{X}_{0}, \\
            \gamma(\vec{x}) - c(\vec{x}, \vec{0}) < 0, \quad &\forall \vec{x} \in \mathbb{X}_{0} \setminus \vec{0}, 
        \end{aligned}
        \label{eq:nn-bound}
    \end{equation}
    then the closed-loop system 
    \begin{equation*}
        \vec{x}[i+1] = f(\vec{x}[i], u_{\textnormal{MAMPC}}^{+}(\vec{x}[i])),  \quad i \in \mathbb{Z}_{\geq 0},
    \end{equation*}
    with $\vec{x}[0] = \vec{x}_{0} \in \mathbb{X}_{0}$ is locally asymptotically stable in $\mathbb{X}_{0}$ without violating any constraints in the original MPC problem~\eqref{eq:mpc}.
    \label{thm:2mampc-nofs}
\end{theorem}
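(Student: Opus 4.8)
The plan is to reduce the statement to Theorem~13.1 in~\cite{borrelli2017predictive} after first disposing of existence and feasibility exactly as in the earlier proofs. Existence of $u_{\textnormal{MAMPC}}^{+}$ follows verbatim from the existence argument of Theorem~\ref{thm:mampc-stability}, since the LQR and NN controllers can always be constructed from~\eqref{eq:mpc}. I would then partition $\mathbb{X}_{0} = \mathcal{R}_{\textnormal{LQR}} \cup (\mathbb{X}_{0} \setminus \mathcal{R}_{\textnormal{LQR}})$, exactly as in the region argument of Theorem~\ref{thm:mampc-stability}. On $\mathcal{R}_{\textnormal{LQR}}$ the fail-free law coincides with $u_{\textnormal{LQR}}$, which is locally asymptotically stable with $\mathcal{R}_{\textnormal{LQR}}$ positively invariant, so it remains only to analyze the region where the NN is active and to show that the two modes together keep the state in $\mathbb{X}_{0}$ and drive it to the origin.

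Next I would establish positive invariance of $\mathbb{X}_{0}$ and constraint satisfaction directly from the first line of~\eqref{eq:nn-bound}. Since $(\vec{x}, u_{\textnormal{NN}}(\vec{x})) \in \mathbb{A} = \mathbb{X} \times \mathbb{U}$ for every $\vec{x} \in \mathbb{X}_{0}$, one NN step from any $\vec{x} \in \mathbb{X}_{0} \setminus \mathcal{R}_{\textnormal{LQR}}$ lands in $\mathbb{X} \subseteq \mathbb{X}_{0}$ while respecting both the state and input constraints; together with invariance of $\mathcal{R}_{\textnormal{LQR}}$, an induction on $i$ shows that the closed loop never leaves $\mathbb{X}_{0}$ and never violates $\mathbb{A}$. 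This delivers the constraint-satisfaction claim and, crucially, certifies that the ``otherwise'' fail-safe branch is never reached, which is the whole point of the theorem.

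For asymptotic convergence I would use the optimal MPC cost $J^{*}$ as the Lyapunov function. By Theorem~\ref{thm:as-mpc} it is continuous, positive definite, and decreases along the MPC closed loop; the remaining two conditions in~\eqref{eq:nn-bound} serve precisely to transfer this descent to the NN closed loop~\eqref{eq:clsys-nn}. Writing $\vec{x}^{+} = f(\vec{x}, u_{\textnormal{NN}}(\vec{x}))$, the rollout defining $L$ gives that $L(\vec{x}) - c(\vec{x}, u_{\textnormal{NN}}(\vec{x}))$ is the NN rollout cost from $\vec{x}^{+}$, which under the terminal-cost and terminal-set ingredients of Theorem~\ref{thm:as-mpc} upper bounds $J^{*}(\vec{x}^{+})$. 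Combining this horizon-shift bound with the suboptimality bound $L(\vec{x}) \leq J^{*}(\vec{x}) + \gamma(\vec{x})$ and the stage-cost lower bound $c(\vec{x}, u_{\textnormal{NN}}(\vec{x})) \geq c(\vec{x}, \vec{0})$ yields
\begin{equation*}
    J^{*}(\vec{x}^{+}) - J^{*}(\vec{x}) \leq \gamma(\vec{x}) - c(\vec{x}, \vec{0}) < 0, \quad \forall \vec{x} \in \mathbb{X}_{0} \setminus \{\vec{0}\},
\end{equation*}
where the final strict inequality is exactly the third condition in~\eqref{eq:nn-bound}. This strict Lyapunov decrease forces the state into the small sublevel sets $\{J^{*} \leq \epsilon\}$, which are neighborhoods of the origin contained in $\mathcal{R}_{\textnormal{LQR}}$, after which $u_{\textnormal{LQR}}$ completes the convergence; standard Lyapunov theory then gives local asymptotic stability of $u_{\textnormal{MAMPC}}^{+}$ in $\mathbb{X}_{0}$. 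This is the conclusion packaged by Theorem~13.1 in~\cite{borrelli2017predictive}, whose hypotheses are matched line by line by~\eqref{eq:nn-bound}.

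The main obstacle I anticipate is making the horizon-shift bound $J^{*}(\vec{x}^{+}) \leq L(\vec{x}) - c(\vec{x}, u_{\textnormal{NN}}(\vec{x}))$ rigorous: $L$ is a moving-horizon rollout cost, so a one-step shift of the NN rollout is an $(N-1)$-stage trajectory whose endpoint need not lie in $\mathbb{X}_{f}$, and relating its cost to the fixed-horizon optimal value $J^{*}(\vec{x}^{+})$ requires precisely the control invariance of $\mathbb{X}_{f}$ and the terminal-cost decrease condition of Theorem~\ref{thm:as-mpc}. This bookkeeping is exactly what is carried out inside Theorem~13.1 of~\cite{borrelli2017predictive}, so I would either cite it directly or reproduce that horizon-shift argument. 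A secondary point to verify is the lower bound $c(\vec{x}, \vec{0}) \leq c(\vec{x}, u_{\textnormal{NN}}(\vec{x}))$, which lets the (untrusted) NN stage cost be replaced by the input-free quantity $c(\vec{x}, \vec{0})$ so that the third condition in~\eqref{eq:nn-bound} applies; this holds for the separable quadratic costs used in Section~\ref{subsec:lqr-from-mpc}, and for a general positive-definite stage cost I would state it as the natural assumption that the stage cost is minimized in the input at the origin.
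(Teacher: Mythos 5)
Your proposal is correct and takes essentially the same route as the paper: the paper's entire proof is the single remark that the theorem ``is a direct application of Theorem 13.1 in~\cite{borrelli2017predictive}'', which is precisely the reduction you perform after matching the hypotheses~\eqref{eq:nn-bound} line by line. The extra material you supply --- existence as in Theorem~\ref{thm:mampc-stability}, positive invariance of $\mathbb{X}_{0}$ from the feasibility condition, the horizon-shift bound, and the caveat that $c(\vec{x}, u_{\textnormal{NN}}(\vec{x})) \geq c(\vec{x}, \vec{0})$ is needed (automatic for the quadratic stage costs in view here) --- is exactly the bookkeeping the paper leaves inside the citation, so it complements rather than diverges from the paper's argument.
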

\begin{proof}
The proof of the theorem is a direct application of Theorem 13.1 in~\cite{borrelli2017predictive}.
\end{proof}
Note that the inequality conditions in~\eqref{eq:nn-bound} are typically verified through sampling and interpolation \cite{borrelli2017predictive}, which makes this approach only suitable for relatively simple, moderately sized problems.
Besides, the above theorem is only a sufficient condition and other sufficient conditions are possible too.
For example, see~\cite{richards2018lyapunov,chen2021learning}.

Last but not least, we emphasize that \textit{all} the analysis above does not impose any condition on the approximation error of the NN.
So long as the NN has the right input and output dimensions, the corresponding MAMPC will be closed-loop stable, even if the weights of the NN are \textit{random}.
In the worst case, MAMPC reduces to a hybrid controller consisting of just the MPC and the LQR, where the NN mode will never be invoked.
This reduced hybrid controller is virtually identical to the one proposed in~\cite{michalska1993robust}, despite of having slightly different running time performance.

% Next, we evaluate the computational performance of the proposed method in Section~\ref{sec:numerical-experiments}.

\section{Numerical Experiments}
\label{sec:numerical-experiments}

To evaluate the running time performance of MAMPC, we conduct four numerical experiments on controlling a pendulum, a triple pendulum, a bicopter, and a quadcopter.
The models are selected for their relevance in industrial robotic applications:
Pendulum models are the building blocks for robot arm manipulation, while copter models are important model classes in unmanned aerial vehicle control.
Performance and efficiency of MAMPC are evaluated through comparison with the corresponding baseline implicit MPC. 

All computations were conducted on an Intel Core i7-1065G7 CPU machine.
Implementation is done via MATLAB.
The following paragraphs highlight the key parameters and results of the four experiments.
For implementation details, please refer to the source code at~\cite{wu2021mampcgithub}.

\subsection{Pendulum}
The first model is a single-arm pendulum as shown in Figure~\ref{fig:pendulum}.
The goal is to maintain the pendulum at the position of the highest potential energy as marked by the dashed line.

The state is defined as $\vec{x} \coloneqq [\theta \quad \dot{\theta}]^{\top} \in [-\pi, \pi) \times \mathbb{R}$, where $\theta$ is the angular displacement from the inverted position.
The input is a scalar $u \in \mathbb{R}$, which is a bounded torque applied at the joint.
The control objective is to steer the pendulum to the origin through the limited torque actuator $u \in [-0.05, 0.05] \hspace{0.5em} \SI{}{\newton\meter}$ at the joint.

\begin{figure}[H]
    \centering
    \includegraphics[width=0.25\textwidth]{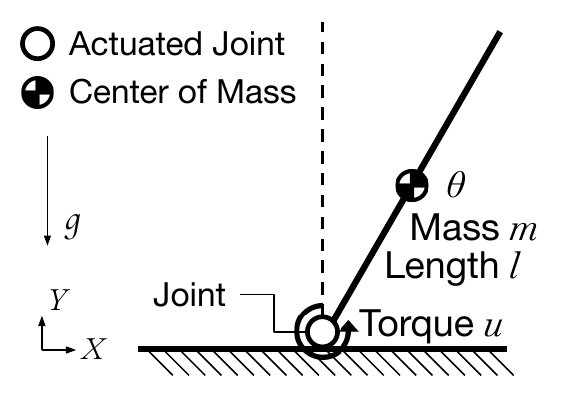}
    \caption{Diagram of a pendulum model. 
    A torque between $[-0.05, 0.05] \hspace{0.5em} \SI{}{\newton\meter}$ is being applied to the hinge to keep the pendulum at its inverted position.}
    \label{fig:pendulum}
\end{figure}

The MPC design highlights a sampling interval of $\SI{0.1}{\second}$, a planning horizon $N = 5$, and three optimization variables per step, resulting in a linearly-constrained quadratic programming (LCQP) of $\SI{0.5}{\second}$ prediction horizon and 15 variables.
Plant nonlinearity is handled through linearizing around the equilibrium, leading to a linear time-invariant system.

We apply a standard MAMPC to control the pendulum with
\begin{equation*}
    \mathcal{R}_{\textnormal{LQR}} = \{\vec{x} \in \mathbb{R}^{2} \mid \Vert \vec{x} \Vert_{2} \leq 0.5\}, \quad N_{\textnormal{LQR}} = 5.
\end{equation*}
The NN is a two-layer, 20-neuron multilayer perceptron (MLP) trained through supervised learning with data uniformly randomly sampled from $\theta \in [\pi, \pi], \dot{\theta} \in [-1, 1]$.

Performance and efficiency of MAMPC, at various stages of training, is illustrated in Figure~\ref{fig:pendulum_exp}.
The total and per-step running times of MAMPC are summarized in the first columns of Table~\ref{tab:sum-total-running-time} and Table~\ref{tab:sum-perstep-running-time}, respectively.

\begin{figure}[H]
    \centering
    \begin{tikzpicture}[font=\sffamily]
        \node[right] at (-4.5, 1.75) {\footnotesize MAMPC Initialized};
        \node[left] at (4.5, 1.75) {$\vert \mathcal{D} \vert = 0$};
        \draw[line width=0.25mm] (-4.5, 1.5) -- (4.5, 1.5);
        \node[inner sep=0pt] (time) at (0,0) {\includegraphics[width=.49\textwidth]{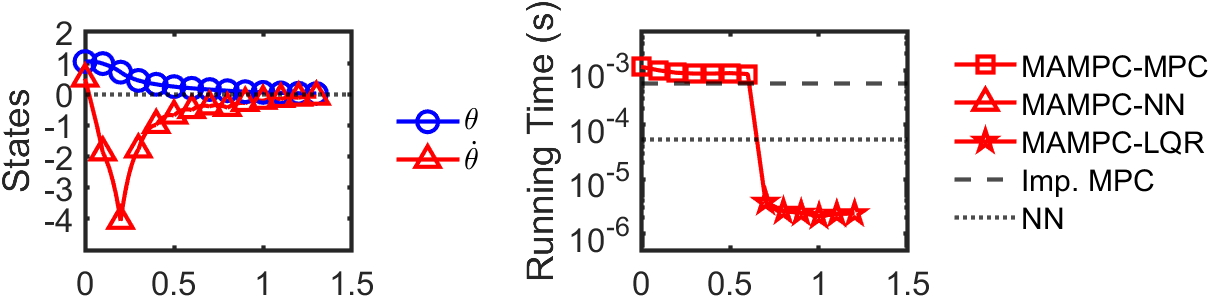}};

        \node[right] at (-4.5, -1.75) {\footnotesize MAMPC Being Trained};
        \node[left] at (4.5, -1.75) {$\vert \mathcal{D} \vert = 150$};
        \draw[line width=0.25mm] (-4.5, -2) -- (4.5, -2);
        \node[inner sep=0pt] (time) at (0,-3.5) {\includegraphics[width=.49\textwidth]{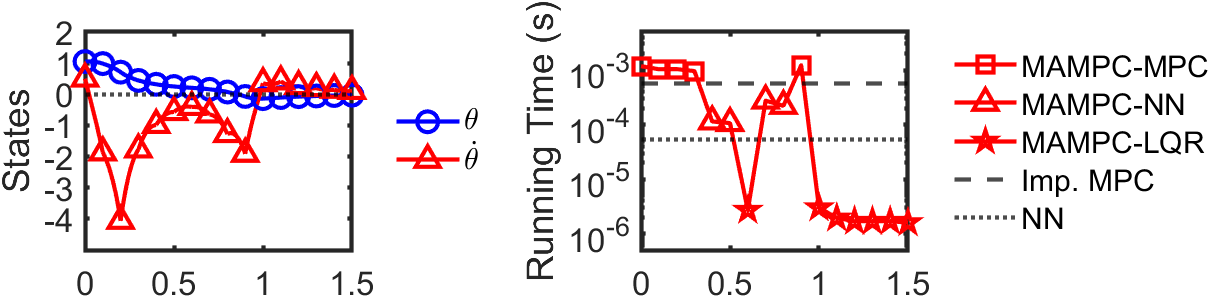}};

        \node[right] at (-4.5, -5.25) {\footnotesize MAMPC Converged};
        \node[left] at (4.5, -5.25) {$\vert \mathcal{D} \vert = 500$};
        \draw[line width=0.25mm] (-4.5, -5.5) -- (4.5, -5.5);
        \node[inner sep=0pt] (time) at (0,-7) {\includegraphics[width=.49\textwidth]{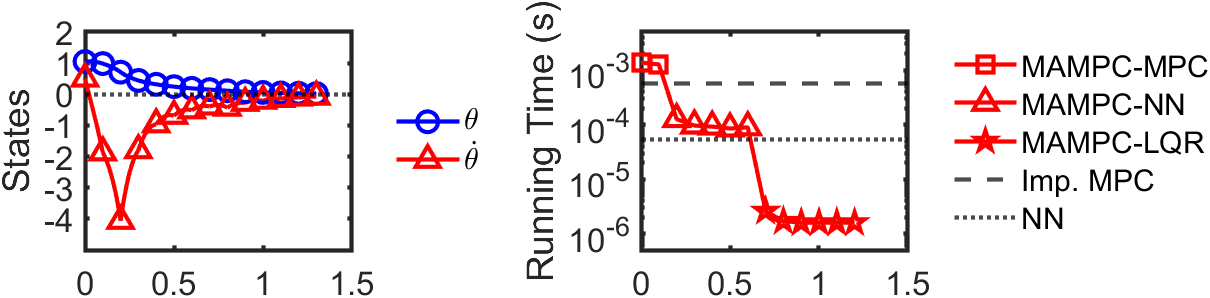}};
    \end{tikzpicture}
    \caption{Application of standard MAMPC to the pendulum problem.
    At initialization, the NN mode fails forward verification and thus never get invoked.
    As the NN is being trained, its likelihood of passing forward verification increases.
    At last, a well trained NN takes over most of MPC steps, resulting in a more efficient operation.}
    \label{fig:pendulum_exp}
\end{figure}

\subsection{Triple Pendulum}

The second model is a triple pendulum as shown in Figure~\ref{fig:triplependulum}, which extends the above pendulum model to a more complex use case.
The goal is to maintain the triple pendulum at the position of the highest potential energy.

The state is defined as $\vec{x} \coloneqq [\theta_{1} \quad \dot{\theta}_{1} \quad \theta_{2} \quad \dot{\theta}_{2} \quad \theta_{3} \quad \dot{\theta}_{3}]^{\top} \in ([-\pi, \pi) \times \mathbb{R})^{3}$, where $\theta_{1}, \theta_{2}, \theta_{3}$ are the angular displacements of the three links respectively.
The input is defined as $\vec{u} \coloneqq [u_{1} \quad u_{2} \quad u_{3}] \in \mathbb{R}^{3}$, which is the applied torques at the three joints respectively.
The control objective is to steer the triple pendulum to the origin by applying three limited torque inputs $u_{1}, u_{2}, u_{3} \in [-1, 1] \hspace{0.5em} \SI{}{\newton\meter}$ at the three joints, respectively.

\begin{figure}[H]
    \centering
    \includegraphics[width=0.30\textwidth]{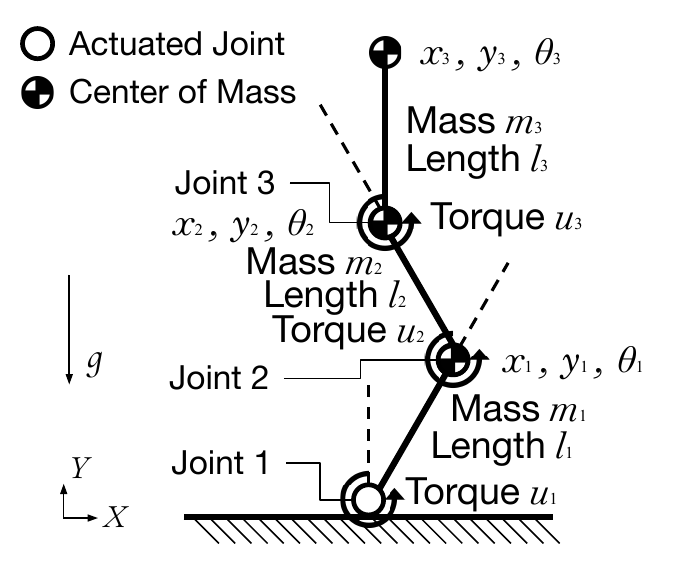}
    \caption{Diagram of a triple pendulum model.
    Note that the centers of mass of the first and second torques coincide with the second and third joints.
    A torque between $[-1, 1] \hspace{0.5em} \SI{}{\newton\meter}$ is being applied to each joint to keep the triple pendulum at its inverted position.}
    \label{fig:triplependulum}
\end{figure}

The MPC design highlights a sampling interval of $\SI{0.1}{\second}$, a planning horizon $N = 5$, and nine optimization variables per step, resulting in a LCQP of $\SI{0.5}{\second}$ prediction horizon and 45 variables.
Linearization is applied as before to handle nonlinearity.

Because triple pendulum is a chaotic system, we apply an alternating-authority MAMPC instead of standard MAMPC with
\begin{equation*}
    \mathcal{R}_{\textnormal{LQR}} = \{\vec{x} \in \mathbb{R}^{6} \mid \Vert \vec{x} \Vert_{2} \leq 0.4\}, \quad N_{\textnormal{LQR}} = 5, \quad i_{d} = 2.
\end{equation*}
The NN features a three-layer, 50-neuron MLP trained through supervised learning with data uniformly randomly sampled from 
\begin{equation*}
    \theta_{1}, \theta_{2}, \theta_{3} \in [-\pi/6, \pi/6], \quad \dot{\theta_{1}}, \dot{\theta_{2}}, \dot{\theta_{3}} \in [-1, 1].
\end{equation*}

Performance and efficiency of MAMPC, at various stages of training, is illustrated in Figure~\ref{fig:triplependulum_exp}.
The total and per-step running times of MAMPC are summarized in the second columns of Table~\ref{tab:sum-total-running-time} and Table~\ref{tab:sum-perstep-running-time}, respectively.

\begin{figure}[H]
    \centering
    \begin{tikzpicture}[font=\sffamily]
        \node[right] at (-4.5, 1.75) {\footnotesize MAMPC Initialized};
        \node[left] at (4.5, 1.75) {$\vert \mathcal{D} \vert = 0$};
        \draw[line width=0.25mm] (-4.5, 1.5) -- (4.5, 1.5);
        \node[inner sep=0pt] (time) at (0,0) {\includegraphics[width=.49\textwidth]{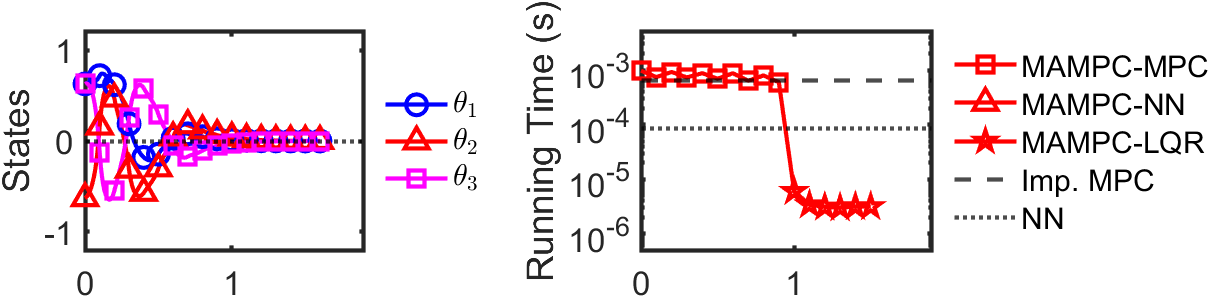}};

        \node[right] at (-4.5, -1.75) {\footnotesize MAMPC Being Trained};
        \node[left] at (4.5, -1.75) {$\vert \mathcal{D} \vert = 20000$};
        \draw[line width=0.25mm] (-4.5, -2) -- (4.5, -2);
        \node[inner sep=0pt] (time) at (0,-3.5) {\includegraphics[width=.49\textwidth]{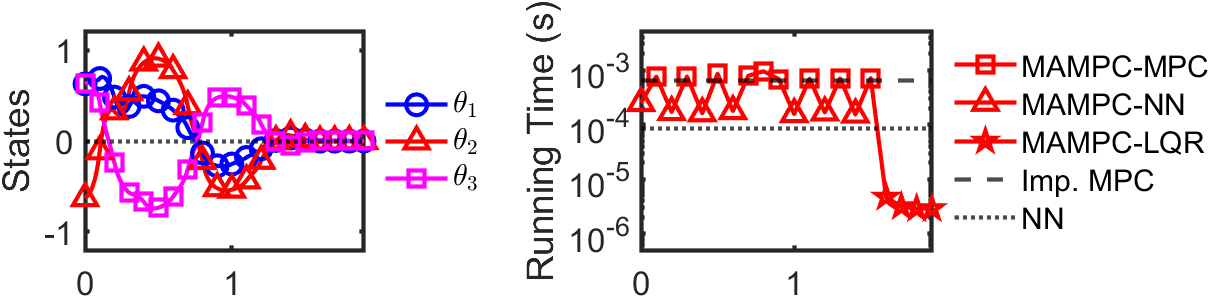}};

        \node[right] at (-4.5, -5.25) {\footnotesize MAMPC Converged};
        \node[left] at (4.5, -5.25) {$\vert \mathcal{D} \vert = 30000$};
        \draw[line width=0.25mm] (-4.5, -5.5) -- (4.5, -5.5);
        \node[inner sep=0pt] (time) at (0,-7) {\includegraphics[width=.49\textwidth]{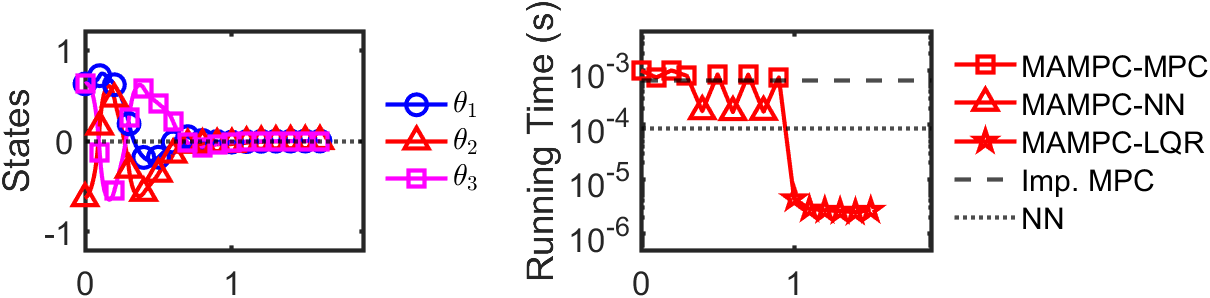}};
    \end{tikzpicture}
    \caption{Application of alternating-authority MAMPC to the triple pendulum problem.
    At initialization, the NN mode fails forward verification and thus never get invoked.
    As the NN is being trained, its likelihood of passing forward verification increases.
    At last, a well trained NN takes over most of MPC steps, resulting in a more efficient operation.}
    \label{fig:triplependulum_exp}
\end{figure}

\subsection{Bicopter}

The third model is a bicopter as shown in Figure~\ref{fig:bicopter}.
The goal is to hover the bicopter in air.

The state is defined as $\vec{x} \coloneqq [x \quad \dot{x} \quad y \quad \dot{y} \quad \theta \quad \dot{\theta}]^{\top} \in \mathbb{R}^{4}\times([-\pi, \pi) \times \mathbb{R})$, where $x, y$ are horizontal and vertical translations and $\theta$ is the angle of tilting.
The input is defined as $\vec{u} \coloneqq [u_{1} \quad u_{2}]^{\top} \in \mathbb{R}^{2}$, where $u_{1}, u_{2}$ are the thrust exerted by the left and right propellers, respectively.
The control objective is to steer the bicopter to the origin by applying limited thrusts $u_{1}, u_{2} \in [0.1, 9.1572] \hspace{0.5em} \SI{}{\newton}$ at the two propellers.

\begin{figure}[H]
    \centering
    \includegraphics[width=0.305\textwidth]{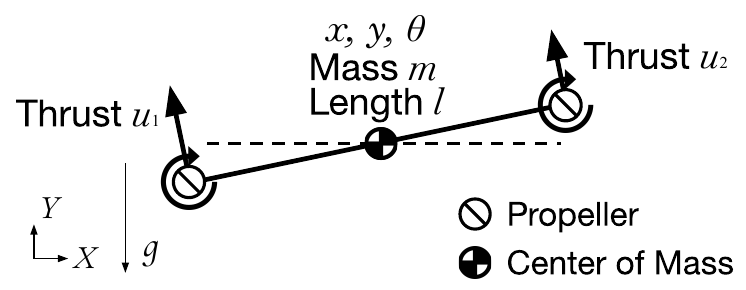}
    \caption{Diagram of a bicopter model.
    Each of the two propellers is capable of generating a thrust between $\SI{0.1}{\newton}$ and $\SI{9.1572}{\newton}$ to keep the copter hover in air.
    The thrust limits are chosen such that hovering thrusts are approximately in the center of the thrust limit.}
    \label{fig:bicopter}
\end{figure}

The MPC design highlights a sampling interval of $\SI{0.1}{\second}$, a planning horizon $N = 20$, and eight optimization variables per step, resulting in a LCQP of $\SI{2.0}{\second}$ prediction horizon and 160 variables.
To handle plant nonlinearity, we linearize the system around the equilibrium.

\begin{figure}[H]
    \centering
    \begin{tikzpicture}[font=\sffamily]
        \node[right] at (-4.5, 1.75) {\footnotesize MAMPC Initialized};
        \node[left] at (4.5, 1.75) {$\vert \mathcal{D} \vert = 0$};
        \draw[line width=0.25mm] (-4.5, 1.5) -- (4.5, 1.5);
        \node[inner sep=0pt] (time) at (0,0) {\includegraphics[width=.49\textwidth]{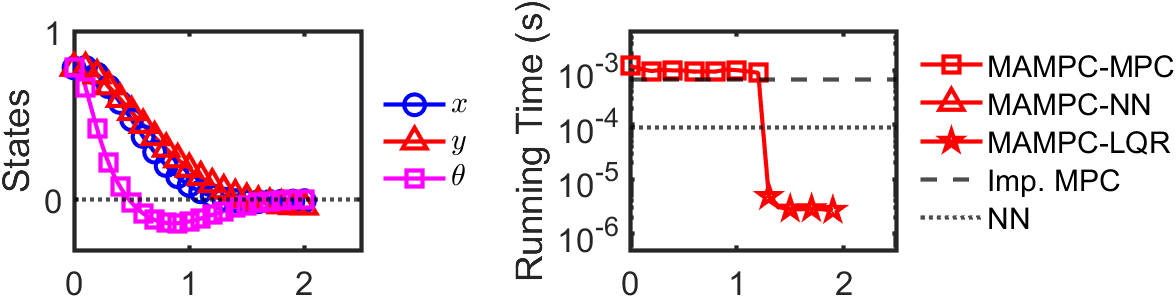}};

        \node[right] at (-4.5, -1.75) {\footnotesize MAMPC Being Trained};
        \node[left] at (4.5, -1.75) {$\vert \mathcal{D} \vert = 50$};
        \draw[line width=0.25mm] (-4.5, -2) -- (4.5, -2);
        \node[inner sep=0pt] (time) at (0,-3.5) {\includegraphics[width=.49\textwidth]{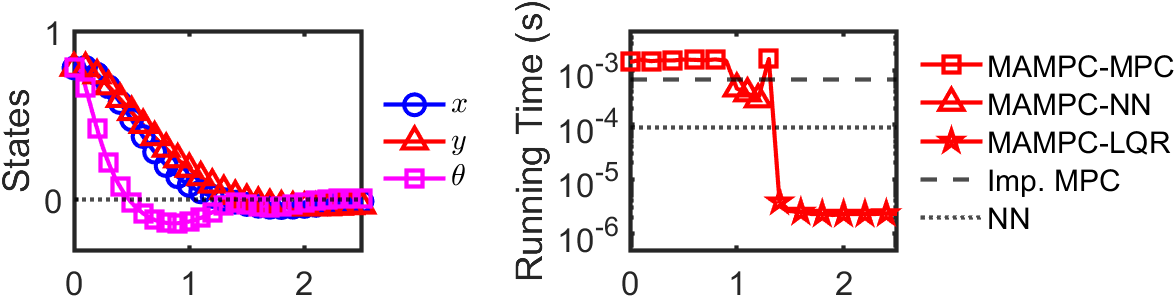}};

        \node[right] at (-4.5, -5.25) {\footnotesize MAMPC Converged};
        \node[left] at (4.5, -5.25) {$\vert \mathcal{D} \vert = 350$};
        \draw[line width=0.25mm] (-4.5, -5.5) -- (4.5, -5.5);
        \node[inner sep=0pt] (time) at (0,-7) {\includegraphics[width=.49\textwidth]{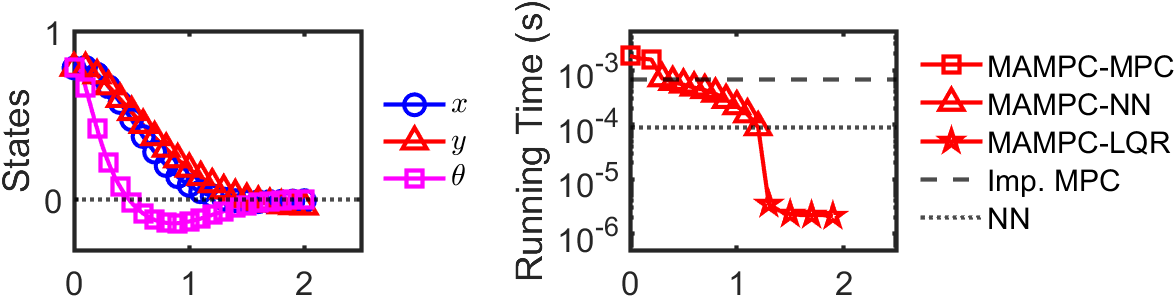}};
    \end{tikzpicture}
    \caption{Application of standard MAMPC to the bicopter problem.
    At initialization, the NN mode fails forward verification and thus never get invoked.
    As the NN is being trained, its likelihood of passing forward verification increases.
    At last, a well trained NN takes over most of MPC steps, resulting in a more efficient operation.}
    \label{fig:bicopter_exp}
\end{figure}

We apply a standard MAMPC to control the bicopter with 
\begin{equation*}
    \mathcal{R}_{\textnormal{LQR}} = \{\vec{x} \in \mathbb{R}^{6} \mid \Vert \vec{x} \Vert_{2} \leq 0.5\}, \quad N_{\textnormal{LQR}} = 10.
\end{equation*}
The NN is a three-layer, 50-neuron MLP trained through supervised learning with data uniformly randomly sampled from 
\begin{equation*}
    x, y, \theta \in [-\pi/2, \pi/2], \quad \dot{x}, \dot{y}, \dot{\theta} \in [-1, 1].
\end{equation*}

Performance and efficiency of MAMPC, at various stages of training, is illustrated in Figure~\ref{fig:bicopter_exp}.
The total and per-step running times of MAMPC are summarized in the third columns of Table~\ref{tab:sum-total-running-time} and Table~\ref{tab:sum-perstep-running-time}, respectively.

\subsection{Quadcopter}
As an extension to the ideal bicopter model, the last model is a quadcopter as shown in Figure~\ref{fig:quadcopter} \cite{alexis2012model}.
The goal is also to hover the copter in air but with more realistic system dynamics.

The state is defined as $\vec{x} \coloneqq [x \; \dot{x} \; y \; \dot{y} \; z \; \dot{z} \; \phi \; \dot{\phi} \; \theta \; \dot{\theta} \; \psi \; \dot{\psi}]^\top \in \mathbb{R}^{6}\times([-\pi, \pi) \times \mathbb{R})^{3}$, where $x, y, z$ represent translations and $\phi, \theta, \psi$ represent rotations corresponding to the three ZYX Euler angles roll, pitch, and yaw.
The input is $\vec{u} = [u_1 \quad u_2 \quad u_3 \quad u_4]^\top \in \mathbb{R}^{4}$, where $u_i$ is the rotational speeds of the $i^\text{th}$ propeller for $i = 1, 2, 3, 4$.
The control objective is to steer the quadcopter to the origin by applying bounded rotations $u_{1}, u_{2}, u_{3}, u_{4} \in [0, 313.96]$ in $\SI{}{\radian\per\second}$ at the four propellers.

\begin{figure}[H]
    \centering
    \includegraphics[width=0.35\textwidth]{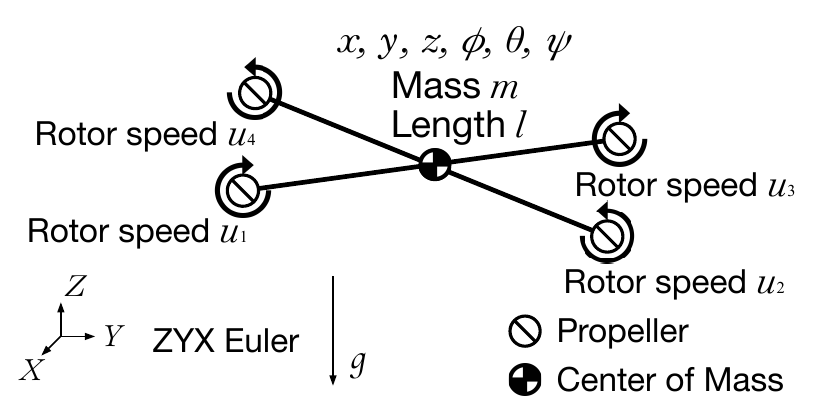}
    \caption{Diagram of a quadcopter model.
    Each of the four propellers is capable of rotating at speed between $\SI{0}{\radian\per\second}$ and $\SI{313.96}{\radian\per\second}$ to keep the copter hover in air.
    The rotor limits are chosen such that hovering rotor rotations are approximately in the center of the rotation range.}
    \label{fig:quadcopter}
\end{figure}

The MPC design highlights a sampling interval of $\SI{0.1}{\second}$, a planning horizon $N = 20$, and 16 optimization variables per step, resulting in a LCQP of $\SI{2.0}{\second}$ prediction horizon and 320 variables.
Nonlinearity is handled by linearization around the equilibrium.

Because the closed-loop quadcopter dynamics with MPC takes many steps to converge, we choose a way-point MAMPC with
\begin{equation*}
    \begin{aligned}
        &\mathcal{D}_{\textnormal{WP}} = \{\vec{x} \in \mathbb{R}^{12} \mid \Vert \vec{x} \Vert_{2} \leq 2\}, \quad N_{\textnormal{WP}} = 10,\\
        &\mathcal{R}_{\textnormal{LQR}} = \{\vec{x} \in \mathbb{R}^{12} \mid \Vert \vec{x} \Vert_{2} \leq 0.5\}, \quad N_{\textnormal{LQR}} = 10.
    \end{aligned}
\end{equation*}
The NN features a four-layer, 60-neuron MLP trained through supervised learning with data uniformly randomly sampled from 
\begin{equation*}
\begin{aligned}
    &x, y, z \in [-0.5, 0.5], \quad \dot{x}, \dot{y}, \dot{z} \in [-0.1, 0.1], \\
    &\phi, \theta \in [-\pi/6, \pi/6], \psi \in [-\pi/4, \pi/4], \dot{\phi}, \dot{\theta}, \dot{\psi} \in [-0.1, 0.1].
\end{aligned}
\end{equation*}

Performance and efficiency of MAMPC, at various stages of training, is illustrated in Figure~\ref{fig:quadcopter_exp}.
The total and per-step running times of MAMPC are summarized in the last columns of Table~\ref{tab:sum-total-running-time} and Table~\ref{tab:sum-perstep-running-time}, respectively.

\begin{figure}[H]
    \centering
    \begin{tikzpicture}[font=\sffamily]
        \node[right] at (-4.5, 1.75) {\footnotesize MAMPC Initialized};
        \node[left] at (4.5, 1.75) {$\vert \mathcal{D} \vert = 0$};
        \draw[line width=0.25mm] (-4.5, 1.5) -- (4.5, 1.5);
        \node[inner sep=0pt] (time) at (0,0) {\includegraphics[width=.49\textwidth]{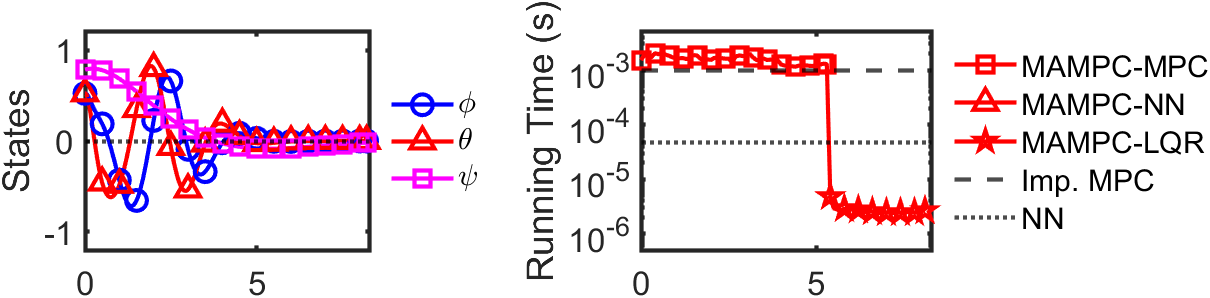}};

        \node[right] at (-4.5, -1.75) {\footnotesize MAMPC Being Trained};
        \node[left] at (4.5, -1.75) {$\vert \mathcal{D} \vert = 400$};
        \draw[line width=0.25mm] (-4.5, -2) -- (4.5, -2);
        \node[inner sep=0pt] (time) at (0,-3.5) {\includegraphics[width=.49\textwidth]{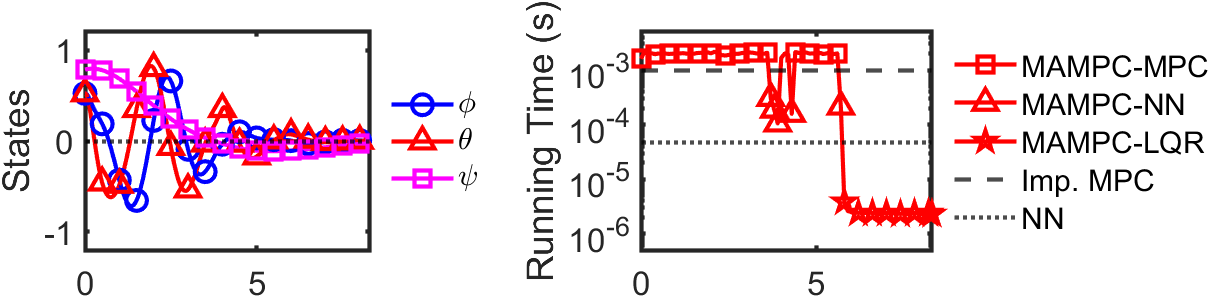}};

        \node[right] at (-4.5, -5.25) {\footnotesize MAMPC Converged};
        \node[left] at (4.5, -5.25) {$\vert \mathcal{D} \vert = 800$};
        \draw[line width=0.25mm] (-4.5, -5.5) -- (4.5, -5.5);
        \node[inner sep=0pt] (time) at (0,-7) {\includegraphics[width=.49\textwidth]{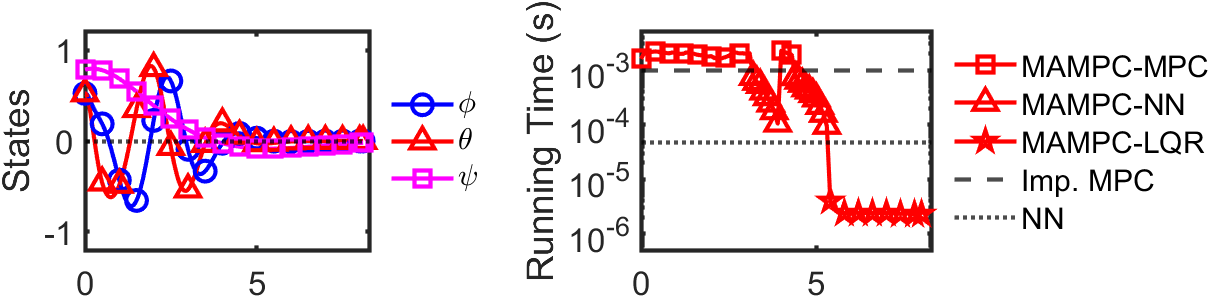}};
    \end{tikzpicture}
    \caption{Application of way-point MAMPC to the quadcopter problem.
    At initialization, the NN mode fails forward verification and thus never get invoked.
    As the NN is being trained, its likelihood of passing forward verification increases.
    At last, a well trained NN takes over most of MPC steps, resulting in a more efficient operation.}
    \label{fig:quadcopter_exp}
\end{figure}

\begin{table}[h]
  \small
  \centering
  \caption{Summary of total running time in millisecond}
  \resizebox{0.49\textwidth}{!}{\begin{tabular}{lcccc}
    \toprule
                & Pendulum & Tri. Pend. & Bicopter & Quadcopter\\
    \midrule
    Imp. MPC              & $10.17 \pm 3.41$     & $13.03 \pm 3.44$    & $15.58 \pm 3.12$     & $92.66 \pm 11.15$\\
    MAMPC (initialized)   & $7.37 \pm 2.40$      & $8.54 \pm 1.36$     & $14.69 \pm 3.97$     & $93.55 \pm 11.67$\\
    MAMPC (being train.)  & $7.21 \pm 2.11$      & $9.00 \pm 1.69$     & $20.04 \pm 4.74$     & $117.64 \pm 13.74$\\
    MAMPC (converged)     & $3.63 \pm 0.94$      & $7.11 \pm 1.63$     & $9.66 \pm 1.85$     & $87.84 \pm 16.69$\\
    \bottomrule
  \end{tabular}}
  \label{tab:sum-total-running-time}
\end{table}

\begin{table}[h]
  \small
  \centering
  \caption{Summary of per-step running time in mircosecond}
  \resizebox{0.49\textwidth}{!}{\begin{tabular}{lcccc}
    \toprule
                        & Pendulum             & Tri. Pend.        & Bicopter          & Quadcopter\\
    \midrule
    Imp. MPC                 & $686.39 \pm 248.03$     & $695.79 \pm 178.03$    & $743.84 \pm 152.67$     & $1032.76 \pm 89.52$\\
    MAMPC-MPC$^*$            & $1529.29 \pm 411.57$    & $897.02 \pm 226.43$    & $1777.19 \pm 404.51$   & $2100.90 \pm 401.96$\\
    MAMPC-NN$^*$             & $105.23 \pm 33.60$      & $187.53 \pm 51.62$     & $418.31 \pm 90.21$     & $420.92 \pm 95.02$\\
    MAMPC-LQR$^*$            & $1.86 \pm 0.46$         & $3.11 \pm 0.92$        & $2.42 \pm 0.55$        & $2.65 \pm 0.67$\\
    NN$^\dagger$             & $59.85 \pm 21.86$       & $79.86 \pm 24.20$      & $98.60 \pm 24.88$      & $56.25 \pm 23.89$\\
    \bottomrule
    \multicolumn{5}{l}{\footnotesize $*$ MAMPC-MPC, MAMPC-NN, and MAMPC-LQR are the MPC, NN, and LQR modes of the MAMPC, respectively.}\\
    \multicolumn{5}{l}{\footnotesize $\dagger$ NN is the NN mode of MAMPC \textit{without} forward verification.}\\
  \end{tabular}}%
  \label{tab:sum-perstep-running-time}
\end{table}

\Fangyu{Running time gradually decreases, a notion of acquired fluency. Draw analogy with human learning. This requires some numerical results.}

\section{Discussions}
\label{sec:discussions}

As shown in Figure~\ref{fig:pendulum_exp}, Figure~\ref{fig:triplependulum_exp}, Figure~\ref{fig:bicopter_exp}, and Figure~\ref{fig:quadcopter_exp} and in Table~\ref{tab:sum-total-running-time}, the MAMPC gains efficiency over time through learning on the trajectory data generated by the implicit MPC policy.
Upon initialization, the MAMPC policy, despite of not being the most efficient, is immediately functional.
After learning on a few samples, the NN mode behaves closer to the MPC mode but still lacks in accuracy.
This period of learning sometimes results an temporary degradation in control efficacy and an temporary increase in running time.
In particular, see Figure~\ref{fig:pendulum_exp} and Figure~\ref{fig:triplependulum_exp}.
After training on sufficient number of samples, the NN mode eventually converges to and takes over the implicit MPC.
Consequently, the overall control efficacy is recovered and computational efficiency improved.

After the NN mode converges, the resulting MAMPC is shown to have superior amortized running time but an inferior worst-case running time.
As the shown in the first four rows of Table~\ref{tab:sum-perstep-running-time}, the LQR mode and NN mode of MAMPC is faster than implicit MPC, while the MPC mode of MAMPC is slower than the implicit method due to overhead from computing the switch conditions.
Therefore, in the worst case, the MAMPC will take more time to stabilize the plant than the implicit MPC.
However, if the MAMPC spends sufficient number of steps in the NN and LQR modes, it will still has faster average running time, which is the case observed in the four numerical experiments.

If the target system is capable of parallel computation, it is possible the reduce the worst-time running time down to the baseline running time of the implicit MPC.
For example, a possible parallelization strategy for standard MAMPC is as follows: \textsl{1)} compute the three control modes of MAMPC in three \textit{parallel} threads, \textsl{2)} among those that pass their corresponding switch conditions, execute whichever that finishes first, and \textsl{3)} reset and repeat for next control step.
Ignoring communication overhead, one can show that the worst-case running time is roughly identical to that of the implicit MPC.
Note that under this setup the MAMPC scheme is \textit{at worst} equivalent to the dual-mode MPC in~\cite{michalska1993robust}.

Of course, it should be noted that there is no guarantee that the resulting MAMPC will be more efficient than the implicit MPC, because there is no guarantee that the NN will converge to an accurate surrogate policy.
Comparing to methods like~\cite{zhang2019safe}, where the implicit MPC controller is completely replaced by a NN controller, we trade a lack of deterministic guarantee on stability for a lack of deterministic guarantee on running time improvement.
We believe that such a trade-off is often desirable in practice: in industrial applications, safety almost always precedes efficiency.

Beyond the direct implications of the experiments, we make following remarks on set design, forward verification, additional running time optimizations, and limitations.

\subsubsection{Set Design}
Identification and design of feasibility set $\mathbb{X}_{0}$ and attraction set $\mathcal{R}_{\textnormal{LQR}}$ is very challenging. 
One one hand, it is easier to just approximate them by conservative set estimates for robustness measures. 
For example, one can replace $\mathbb{X}_{0}$ and $\mathcal{R}_{\textnormal{LQR}}$ with $\mathbb{X}_{0}^{-}$ and $\mathcal{R}_{\textnormal{LQR}}^{-}$, where $\mathbb{X}_{0}^{-} \subset \mathbb{X}_{0}$ and $\mathcal{R}_{\textnormal{LQR}}^{-} \subset \mathcal{R}_{\textnormal{LQR}}$.
On the other hand, however, to guarantee computational efficiency, one should make sets $\mathcal{R}_{\textnormal{LQR}}, \mathcal{R}_{\textnormal{NN}}$ as large as possible.
This way, faster modes of MAMPC get invoked more frequently than the slow, default mode of MAMPC, leading to a more efficient amortized running time performance.
% However, a drawback of large set design is that it leaves less margin for errors.
Conservative set design makes the system more robust to model uncertainties but also slower in running time.
Balancing the trade-off between latency and robustness is therefore a design process.

% Likewise, similar argument can be made about $\mathcal{D}_{\textnormal{WP}}$.

\subsubsection{Forward verification}
The horizon used in forward verification immediately affects the maximum possible running time of the MAMPC.
A valid verification horizon should not make the running time of the MPC mode longer than the maximum allowable computational latency. 
Moreover, an additional rule of thumb is to choose a verification horizon such that the maximum possible running time of NN mode is on par with the running time of the implicit MPC.

The simulation method used for forward verification also directly impacts the worst-case latency of the MAMPC.
For continuous plants, we recommend to use forward Euler method with a sampling time that is as large as possible and to use a conservative set design to absorb integration errors.
Like designing sets, choosing the right sampling time is a balancing act between latency and robustness. 

\subsubsection{Additional Optimizations}
For small to moderately sized problems, we may consider developing a fail-free MAMPC as specified in Theorem~\ref{thm:2mampc-nofs}, since it will further speed up computation.
Fail-free MAMPC prevents controller from ever having to default to fail-safe mode and therefore skips the routine computation of forward verification in NN mode.
It significantly improves the amortized and worst-case running time.

Another possibility for reducing running time is to pre-compute the results of forward verification for the entire state space \textit{offline} and look up the results \textit{online}.
This look-up table approach is in spirit similar to explicit MPC.
The complexity of this approach also grows exponentially with state dimension, so it suffers the similar scalability issue like explicit MPC does.

Of course, a third possibility for reducing running time is to speed up computation of the trained neural network through techniques such as quantization, pruning, or model distillation.

Finally, it is sometimes useful to further improve the performance of the NN mode via unsupervised reinforcement learning.
This is usually the case when the default implicit MPC solver is sub-optimal, typically due to lack of convexity in the objective or constraints or due to large problem dimension induced by a long prediction horizon.

\section{Conclusion}
\label{sec:conclusion}

To improve \textit{amortized} computational efficiency of MPC, we have developed a triple-mode hybrid control named MAMPC.
Stability of MAMPC is guaranteed via forward simulation, while efficiency is achieved by replacing MPC with a more efficient NN or LQR, whenever stability permits.
Numerical experiments indicate that MAMPC often has a better \textit{amortized} running time but a slightly prolonged \textit{worst-case} per-step running time compared to implicit MPC method.

\bibliographystyle{unsrt}
\bibliography{references}

\appendix

\subsection{Proof of Theorem~\ref{thm:mampc-stability}}
\label{app:mampc-proof}
\begin{proof}
    Existence of MAMPC depends on existence of LQR and NN.
    A LQR controller of the form~\eqref{eq:lqr} can always be derived from~\eqref{eq:mpc} because  
    \textsl{1)} $f$ is continuously differentiable, so it can always be linearized to produce the $\mat{A}, \mat{B}$ matrices in~\eqref{eq:lqr};
    \textsl{2)} the linearized system $(\mat{A}, \mat{B})$ is by assumption \textit{stabilizable}, so region of attraction of LQR is nonempty, i.e., $\mathcal{R}_{\textnormal{LQR}} \neq \emptyset$;
    \textsl{3)} as for $\mat{Q}, \mat{R}$ we only require $\mat{Q} \succeq 0, \mat{R} \succ 0$, which is always possible;
    \textsl{4)} lastly, optimization problem~\eqref{eq:lqr} is just a relaxation of problem~\eqref{eq:mpc}, which can always be solved.

    Meanwhile, a NN controller trivially exists because we do not require NN to satisfy any properties other than the basic definition of a NN.
    This completes the first part of the proof on existence.
    
    % This completes the first part of the proof on existence.
    
    To prove local asymptotic stability of standard MAMPC in $\mathbb{X}_{0}$, we partition $\mathbb{X}_{0}$ into three regions as follows
    \begin{equation*}
        \mathbb{X}_{0} = \mathcal{R}_{\textnormal{LQR}} \cup \mathcal{R}_{\textnormal{NN}} \cup \mathcal{R}_{\textnormal{MPC}},
    \end{equation*}
    where $\mathcal{R}_{\textnormal{NN}}$ is the set of states where the NN is invoked and $\mathcal{R}_{\textnormal{MPC}}$ is the rest of states in $\mathbb{X}_{0}$, i.e., $\mathcal{R}_{\textnormal{MPC}} \coloneqq \mathbb{X}_{0} \setminus (\mathcal{R}_{\textnormal{LQR}} \cup \mathcal{R}_{\textnormal{NN}})$.
    By construction, $\mathcal{R}_{\textnormal{LQR}}$, $\mathcal{R}_{\textnormal{NN}}$, and $\mathcal{R}_{\textnormal{MPC}}$ are mutually disjoint.
    
    For every state $\vec{x}[0] \in \mathcal{R}_{\textnormal{LQR}}$, $u_{\textnormal{LQR}}$ is invoked and the closed-loop system~\eqref{eq:clsys-mampc} will be equivalent to~\eqref{eq:clsys-lqr}, which is locally asymptotically stable.
    
    For every state $\vec{x}[0] \in \mathcal{R}_{\textnormal{NN}}$, $u_{\textnormal{NN}}$ is invoked and the closed-loop system~\eqref{eq:clsys-mampc} will be equivalent to~\eqref{eq:clsys-nn} until the state enters $\mathcal{R}_{\textnormal{LQR}}$ in no more than $N_{\textnormal{LQR}}$ steps. 
    The system will stay in $\mathcal{R}_{\textnormal{LQR}}$ and will be taken to the origin by $u_{\textnormal{LQR}}$ as $i \rightarrow \infty$.
    Consequently, the closed-loop system~\eqref{eq:clsys-mampc} is locally asymptotically stable in $\mathcal{R}_{\textnormal{MPC}} \cup \mathcal{R}_{\textnormal{NN}}$.
    
    For every state $\vec{x}[0] \in \mathcal{R}_{\textnormal{MPC}}$, $u_{\textnormal{MPC}}$ is invoked and the closed-loop system~\eqref{eq:clsys-mampc} will be equivalent to~\eqref{eq:clsys-mpc} for at least one time step.
    In the next time step, if the state enters $\mathcal{R}_{\textnormal{LQR}} \cup \mathcal{R}_{\textnormal{NN}}$, then the closed-loop system will converge to the origin as it is locally asymptotically stable in $\mathcal{R}_{\textnormal{LQR}} \cup \mathcal{R}_{\textnormal{NN}}$; otherwise, $\mathcal{R}_{\textnormal{MPC}}$ will be invoked again.
    Because the closed-loop system~\eqref{eq:clsys-mpc} is locally asymptotically stable, even if MPC never brings the state into $\mathcal{R}_{\textnormal{NN}}$, it will eventually steer the system into $\mathcal{R}_{\textnormal{LQR}}$ in finite time, which will then be taken to the origin by $u_{\textnormal{LQR}}$.
    Consequently, the closed-loop system~\eqref{eq:clsys-mampc} is locally asymptotically stable in $\mathbb{X}_{0}$.
    % This completes the second part of the proof on stability.
\end{proof}

\subsection{Proof of Theorem~\ref{thm:aa-mampc-stability}}
\label{app:aa-mampc-proof}

\begin{proof}
    The first part of the proof on existence is identical to the one in Theorem~\ref{thm:mampc-stability}.
    
    To prove stability of alternating-authority MAMPC, we partition  $\mathbb{X}_{0}$ into two disjoint regions  $\mathbb{X}_{0} = \mathcal{R}_{\textnormal{LQR}} \cup \mathcal{R}_{\textnormal{AA}}$,
    where $\mathcal{R}_{\textnormal{AA}} \coloneqq \mathbb{X}_{0} \setminus \mathcal{R}_{\textnormal{LQR}}$.
    
    For every state $\vec{x}[0] \in \mathcal{R}_{\textnormal{LQR}}$, the closed-loop system~\eqref{eq:clsys-aa-mampc} is identical to closed-loop system~\eqref{eq:clsys-mampc}, which is locally asymptotically stable, as shown in Theorem~\eqref{thm:mampc-stability}.
    
    Next, we prove by contradiction that the system will never escape $\mathbb{X}_{0}$.
    Suppose there exits an escaping control $\bar{\vec{u}}$ such that $\bar{\vec{x}} \in \mathbb{X}_{0}$ but $f(\bar{\vec{x}}, \bar{\vec{u}}) \notin \mathbb{X}_{0}$. 
    This escaping control $\bar{\vec{u}}$ must be produced by either $u_{\textnormal{MPC}}$, $u_{\textnormal{NN}}$, or $u_{\textnormal{LQR}}$, that is, $\bar{\vec{u}} = u_{\textnormal{MPC}}(\bar{\vec{x}})$, $\bar{\vec{u}} = u_{\textnormal{NN}}(\bar{\vec{x}})$, or $\bar{\vec{u}} = u_{\textnormal{LQR}}(\bar{\vec{x}})$.
    If $\bar{\vec{u}} = u_{\textnormal{MPC}}(\bar{\vec{x}})$, then $u_{\textnormal{MPC}}$ is not persistently feasible in $\mathbb{X}_{0}$, which is not possible because we assume that $u_{\textnormal{MPC}}$ satisfies Theorem~\eqref{thm:as-mpc}.
    If $\bar{\vec{u}} = u_{\textnormal{NN}}(\bar{\vec{x}})$, then the following switch condition must hold: $f(\bar{\vec{x}}, \bar{\vec{u}}) \in \mathbb{X} \subseteq \mathbb{X}_{0}$.
    However, this is not possible since $\bar{\vec{u}}$ is assumed to be an escaping control, that is, $f(\bar{\vec{x}}, \bar{\vec{u}}) \notin \mathbb{X}_{0}$.
    If $\bar{\vec{u}} = u_{\textnormal{LQR}}(\bar{\vec{x}})$, then there must be $\bar{\vec{x}} \in \mathcal{R}_{\textnormal{LQR}}$.
    Since $\mathcal{R}_{\textnormal{LQR}}$ is positively invariant, we must have $f(\bar{\vec{x}}, \bar{\vec{u}}) \in \mathcal{R}_{\textnormal{LQR}} \subset \mathbb{X}_{0}$, which is not possible since $f(\bar{\vec{x}}, \bar{\vec{u}}) \notin \mathbb{X}_{0}$.
    Therefore, the system will never escape $\mathbb{X}_{0}$.
    % This completes the second part of the proof on stability.
\end{proof}

\subsection{Proof of Theorem~\ref{thm:wp-mampc-stability}}
\label{app:wp-mampc-proof}

\begin{proof}
    The first part of the proof on existence is identical to the one in Theorem~\ref{thm:mampc-stability}.
    
    To prove stability of way-point MAMPC, we partition  $\mathbb{X}_{0}$ into five disjoint regions as follows
    \begin{equation*}
        \mathbb{X}_{0} = \mathcal{R}_{\textnormal{LQR}} \cup \mathcal{R}_{\textnormal{NN}}^{i} \cup \mathcal{R}_{\textnormal{MPC}}^{i} \cup \mathcal{R}_{\textnormal{NN}}^{o} \cup \mathcal{R}_{\textnormal{MPC}}^{o},
    \end{equation*}
    where $\mathcal{R}_{\textnormal{NN}}^{i}$ is the set of states in or on the boundary of $\mathcal{D}_{\textnormal{WP}}$ where the NN is invoked;
    $\mathcal{R}_{\textnormal{NN}}^{o}$ is the set of states in $\mathbb{X}_{0} \setminus \mathcal{D}_{\textnormal{WP}}$ where the NN is invoked;
    $\mathcal{R}_{\textnormal{MPC}}^{i} \coloneqq \mathcal{D}_{\textnormal{WP}} \setminus (\mathcal{R}_{\textnormal{LQR}} \cup \mathcal{R}_{\textnormal{NN}}^{i})$; and
    $\mathcal{R}_{\textnormal{MPC}}^{o} \coloneqq (\mathbb{X}_{0} \setminus \mathcal{D}_{\textnormal{WP}}) \setminus \mathcal{R}_{\textnormal{NN}}^{o}$.

    For every state $\vec{x} \in \mathcal{R}_{\textnormal{LQR}}$, the closed-loop system~\eqref{eq:clsys-wp-mampc} is identical to closed-loop system~\eqref{eq:clsys-mampc}, which is locally asymptotically stable, as shown in Theorem~\eqref{thm:mampc-stability}.
    
    Next, we prove that the system will eventually enter and stay within $\mathcal{D}_{\textnormal{WP}}^{+}$.
    
    For every state $\vec{x}[0] \in \mathbb{X}_{0} \setminus \mathcal{D}_{\textnormal{WP}}$, either NN or MPC will steer the system into $\mathcal{D}_{\textnormal{WP}}^{+}$ in finite time.
    If NN is ever invoked, the system~\eqref{eq:clsys-wp-mampc} will be taken into $\mathcal{D}_{\textnormal{WP}}$ in no more than $N_{\textnormal{WP}}$ steps.
    Otherwise, MPC will bring the system~\eqref{eq:clsys-wp-mampc} into $\mathcal{D}_{\textnormal{WP}}$ in finite time since the MPC is locally asymptotically stable in $\mathbb{X}_{0}$.

    For every state $\vec{x}[0] \in \mathcal{D}_{\textnormal{WP}}$, we prove by contradiction that there exists a $j \geq 0$ such that for all $i \geq j, \vec{x}[i] \in \mathcal{D}_{\textnormal{WP}}^{+}$.
    Suppose there exists a $\vec{x}[0] = \bar{\vec{x}} \in \mathcal{D}_{\textnormal{WP}}$ such that for every $k \geq 0$, there exists an $i \geq k$ such that $\vec{x}[i] \notin \mathcal{D}_{\textnormal{WP}}^{+}$.
    Without loss of generality, define a control $\bar{\vec{u}}$ such that $\vec{x}[i-1] \in \mathcal{D}_{\textnormal{WP}}^{+}$ but $\vec{x}[i] = f(\vec{x}[i-1], \bar{\vec{u}}) \notin \mathcal{D}_{\textnormal{WP}}^{+}$. 

    This escaping control $\bar{\vec{u}}$ must be produced by either $u_{\textnormal{MPC}}$, $u_{\textnormal{NN}}$, or $u_{\textnormal{LQR}}$, that is, $\bar{\vec{u}} = u_{\textnormal{MPC}}(\vec{x}[i-1])$, $\bar{\vec{u}} = u_{\textnormal{NN}}(\vec{x}[i-1])$, or $\bar{\vec{u}} = u_{\textnormal{LQR}}(\vec{x}[i-1])$.
    If $\bar{\vec{u}} = u_{\textnormal{MPC}}(\vec{x}[i-1])$, then the close-loop system~\eqref{eq:clsys-wp-mampc} is equivalent to system~\eqref{eq:clsys-mpc}.
    Because $\vec{x}[i] \notin \mathcal{D}_{\textnormal{WP}}^{+}$, we have
    \begin{equation*}
        V(\vec{x}[i]) = V(f(\vec{x}[i-1], u_{\textnormal{MPC}}(\vec{x}[i-1])) > V(\vec{x}[i-1]),
    \end{equation*}
    but this is not possible because $V$ is a Lyapunov function of the system~\eqref{eq:clsys-mpc}, i.e., 
    \begin{equation*}
        V(f(\vec{x}[i-1], u_{\textnormal{MPC}}(\vec{x}[i-1])) \leq V(\vec{x}[i-1]).
    \end{equation*}
    If $\bar{\vec{u}} = u_{\textnormal{NN}}(\vec{x}[i-1])$, then $f(\vec{x}[i-1], u_{\textnormal{NN}}(\vec{x}[i-1])) \notin \mathcal{D}_{\textnormal{WP}}$, which is not possible because invocation of NN implies that $f(\vec{x}[i-1], u_{\textnormal{NN}}(\vec{x}[i-1])) \in \mathcal{D}_{\textnormal{WP}}$.
    If $\bar{\vec{u}} = u_{\textnormal{LQR}}(\vec{x}[i-1])$, then by assumption $\lim_{i\rightarrow \infty}\vec{x}[i]$ does not exist.
    However, this is impossible because $\lim_{i\rightarrow \infty}\vec{x}[i] = \vec{0}$ by the local asymptotic stability of the system~\eqref{eq:clsys-wp-mampc} in $\mathcal{R}_{\textnormal{LQR}}$.
    Consequently, the assumption must be false.
    % This completes the second part of the proof on stability.
\end{proof}

\end{document}